\documentclass[11pt]{article}

\usepackage{amsmath, amssymb, amsthm, mathtools}
\usepackage[margin=1in]{geometry}
\usepackage{enumitem}
\usepackage{hyperref}
\usepackage{cleveref}
\usepackage{mleftright}
\usepackage{xcolor}
\hypersetup{
	colorlinks=true,
	linkcolor=blue!50!black, 
	citecolor=green!50!black, 
	urlcolor=red!50!black 
}
\usepackage{algorithm}
\usepackage{algorithmicx}
\usepackage{algpseudocode}

\newtheorem{theorem}{Theorem}[section]
\newtheorem{lemma}[theorem]{Lemma}

\theoremstyle{remark}

\theoremstyle{definition}
\newtheorem{definition}{Definition}

\newcommand{\Sph}{\mathbb{S}}
\newcommand{\R}{\mathbb{R}}
\newcommand{\Z}{\mathbb{Z}}

\newcommand{\eps}{\epsilon}

\newcommand{\cC}{\mathcal{C}}

\newcommand{\cS}{\mathcal{S}}
\newcommand{\ip}[2]{\langle #1,#2\rangle}
\newcommand{\norm}[1]{\left\|#1\right\|}

\newcommand{\op}{\textrm{op}}

\DeclareMathOperator{\poly}{poly}
\DeclareMathOperator{\polylog}{polylog}
\DeclareMathOperator{\range}{range}

\DeclareMathOperator*{\E}{\mathbb{E}}

\DeclareMathOperator{\sgn}{sgn}
\DeclareMathOperator{\diag}{diag}

\title{A Near-Optimal Lower Bound for $\ell_p$-Subspace Embeddings, $1\leq p<2$}
\author{Yi Li \\ Nanyang Technological University \\ \texttt{yili@ntu.edu.sg}}
\date{}
\begin{document}
\maketitle

\begin{abstract}
	For $d \geq 2$, $p \geq 1$ and $\epsilon > 0$, let $N_p(d,\epsilon)$ be the smallest integer $N$ such that for every integer $n$ and every $A\in\mathbb{R}^{n\times d}$, there exists a matrix $\Phi\in\mathbb{R}^{N\times n}$ satisfying
	$(1-\epsilon)\lVert Ax\rVert_p\leq \lVert\Phi A x\rVert_p\leq (1+\epsilon)\lVert Ax\rVert_p$ for all $x\in\mathbb{R}^d$.
	For every constant $p\geq 1$ with $p\not\in 2\mathbb{Z}$, when $d\gtrsim_p \log(1/\epsilon)$, the bound
	\[
		N_p(d,\epsilon) \gtrsim_{p} \frac{d}{\epsilon^2 \operatorname{polylog}(d/\epsilon)}
	\]
	is established. This improves the previous lower bound $\Omega(1/(\epsilon^2\operatorname{polylog}(1/\epsilon)))$ due to Li et al. (SICOMP 2021) and is optimal up to logarithmic factors for $1\leq p<2$.  The central technical idea originated from ChatGPT 5.6 Sol.
\end{abstract}

\section{Introduction}

The $\ell_p$-subspace embedding problem is a classical problem in the local theory of Banach spaces. In its original form, it considers embedding a finite-dimensional subspace of $L^p[0,1]$. Since such a subspace can be embedded with small distortion into $\ell_p^n$ with a sufficiently large $n$, we present the $\ell_p$-subspace embedding problem in the following form.

\begin{definition}[$\ell_p$-subspace embedding]
	For $d\geq 1$, $p\geq 1$ and $\eps>0$, let $N_p(d,\eps)$ be the smallest integer $N$ such that for every integer $n$ and every $A\in\R^{n\times d}$, there exists a matrix $\Phi \in \R^{N\times n}$ satisfying
	\[
		(1-\epsilon)\norm{Ax}_{p}\leq \norm{\Phi A x}_p \leq (1+\eps)\norm{Ax}_p,	\qquad \forall x\in \R^d.
	\]
\end{definition}

The problem has a long research history, dating back to the 1980s; see a comprehensive survey in~\cite{handbook:19}. Different results in the literature obtain similar upper bounds for $N_p(d,\eps)$ of the form
\begin{equation}\label{eqn:N_p(d,eps)}
	N_p(d,\eps) \leq C \eps^{-2}d^{\max\{1, p/2\}}\log^c (d/\eps)
\end{equation}
but with different logarithmic factors. Here, $C,c > 0$ are absolute constants. The proofs can be found in, e.g., the monograph by Ledoux and Talagrand~\cite{LT91}. It is known that when $p\in 2\Z$, $N_p(d,\eps) \lesssim p^{-2}\eps^{-2}\binom{d+p/2-1}{p/2}$ for $\eps\leq 1/p$, without additional logarithmic factors~\cite{schechtman:tight}. Very recently, Reis and Rothvoss~\cite{RR26} obtained $N_1(d,\eps) \lesssim d/\eps^2$, removing the logarithmic factor for $p = 1$.

For fixed $\eps$, the dependence $d^{\max\{1,p/2\}}$ in \eqref{eqn:N_p(d,eps)} has long been known to be tight. Regarding the dependence on $\eps$, it was proved a few years ago that $\eps^{-2}$ is tight up to logarithmic factors when $d\gtrsim \log(1/\eps)$ and $p\not\in 2\Z$~\cite{LWW21}. Here, some constraint on $d$ is necessary, since it is known that one can do strictly better than $\eps^{-2}$ when $d$ is a constant. In particular, for fixed $d$ and $p\not\in 2\Z$, it is now known that $N_p(d,\eps) \asymp_{d,p} (1/\eps)^{2(d-1)/(d+2p)}$. The lower bound was proved in~\cite{LLW23} and the upper bound was proved very recently in~\cite{L26:const_dim}. The assumption $p\not\in 2\Z$ is also necessary, since isometric embeddings are known to exist for $p\in 2\Z$~\cite{handbook:21}, and so there cannot be a lower bound involving $\eps$.

For $p\in[1,2)$ and $d\gtrsim \log(1/\eps)$, Li et al.~\cite{LWW21} also proved that the factor $\eps^{-2}d$ is tight up to logarithmic factors when the embedding matrix $\Phi$ is restricted to be a row-sampling matrix (with rescaling).
However, it was not known in general whether the combined dependence $\eps^{-2}d^{\max\{1,p/2\}}$ is tight. Our main result gives a lower bound of $\widetilde{\Omega}_p(d/\eps^2)$ for every constant $p\geq 1$ with $p\notin 2\mathbb{Z}$. In particular, for $1\leq p<2$, this matches the known upper bound up to logarithmic factors.

The proof follows the framework of Li et al.~\cite{LWW21}. They introduced a data structure problem called the $\ell_p$-subspace sketch problem, proved a bit lower bound for it, and then converted this bit lower bound to a dimension lower bound. We first define the following for-all version of the $\ell_p$-subspace sketch problem.
\begin{definition}[$\ell_p$-subspace sketch, for-all version]
	Given an $n\times d$ matrix $A$ with entries specified by $O(\log(nd))$ bits,
	an accuracy parameter $\eps  > 0$ and a constant $p\geq 1$, design a data structure
	$Q_p$ so that with probability at least 0.9, $Q_p(x) = (1 \pm  \eps)\|Ax\|_p^p$ for all $x \in \R^d$.
\end{definition}

Throughout the paper, let $\alpha_p=\max\{1,p/2\}$. We have the following lower bound for the $\ell_p$-subspace sketch problem.

\begin{theorem}\label{thm:subspace_sketch_lower_bound}
	Let $p \in  [1, \infty) \setminus 
	2\mathbb{Z}$ be a constant. For any
	$d = \Omega_p(\log(1/\eps))$ and
	$n = \widetilde\Omega_p(d^{\alpha_p}/\eps^{2+2\alpha_p})$, we have that $\Omega(d^2/(\eps^2\polylog(1/\eps)))$ bits are necessary to solve the $\ell_p$-subspace sketch problem.
\end{theorem}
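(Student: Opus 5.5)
We follow the encoding framework of Li et al.~\cite{LWW21}. We exhibit a large family of matrices $A$ such that any data structure solving the for-all $\ell_p$-subspace sketch problem for $A$ lets us recover $\Omega(d^2/(\eps^2\polylog(1/\eps)))$ bits of information about $A$. A standard information-theoretic (Fano/indexing) argument then yields the bit lower bound. In~\cite{LWW21} the matrix had $d$ columns but encoded only about $\eps^{-2}$ bits. The reason is that a single $d$-dimensional block was used to realize one ``indexing'' instance of size $2^{\Theta(d)}\approx\eps^{-2}$, namely with $d\approx\log(1/\eps)$. Our plan is to tensor this construction: split the $d$ coordinates into $m = d/d_0$ disjoint blocks, each of size $d_0 = \Theta_p(\log(1/\eps))$. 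Inside each pair of blocks $(i,j)$ we plant an independent copy of the LWW hard instance, and that copy carries about $\eps^{-2}/\polylog$ bits. With $\Theta(m^2)$ pairs, or alternatively with $d$ disjoint ``rows'' of information each of length $d/\eps^2$, this gives $d^2/(\eps^2\polylog)$ bits in total.

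Concretely, the steps are as follows. (1) Recall the one-block gadget. Take the rows $\{\pm1\}^{d_0}$ weighted by a hidden sign vector $s\in\{\pm1\}^{2^{d_0}}$. For $p\notin 2\Z$, the Fourier/spherical-harmonic coefficients of $x\mapsto|\ip{a}{x}|^p$ on the hypercube are nonvanishing at degree about $\log(1/\eps)$, of size $\eps$-comparable. Hence querying $\|Ax\|_p^p$ to accuracy $(1\pm\eps)$ at suitable $x$ reveals individual Walsh coefficients of $s$, which gives a Gap-Hamming-type distinguishing task. (2) Build the big instance. Partition $[d]$ into blocks $B_1,\dots,B_m$, and for each ordered pair $(k,l)$ add rows supported on $B_k\cup B_l$. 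These rows are of the form $(u, v)$ with $u\in\{\pm1\}^{d_0}$, $v$ ranging over an appropriate set, and independent hidden weights $s^{(k,l)}$. We choose the number of rows per gadget and the row scaling so that the total $n$ matches $\widetilde\Omega_p(d^{\alpha_p}/\eps^{2+2\alpha_p})$. (3) Decoding. To recover a bit of $s^{(k,l)}$, query $x$ supported on $B_k\cup B_l$. Rows belonging to other gadgets either vanish (if disjoint from $B_k\cup B_l$) or contribute a quantity that depends only on a single block, $\sum|\ip{u}{x_{B_k}}|^p$. We cancel that quantity by a second query with $x_{B_l}$ negated, or by differencing several queries, exploiting that only the joint terms depend on $s^{(k,l)}$. (4) Conclude via the for-all guarantee: all $\binom{m}{2}\cdot\eps^{-2}/\polylog$ bits are decodable simultaneously with probability $0.9$. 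Hence the sketch has at least that much mutual information, which is $\Omega(d^2/(\eps^2\polylog(1/\eps)))$ after accounting for $d_0$ factors in the polylog.

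The main obstacle is step (3). The additive error is $\eps\|Ax\|_p^p$, and $\|Ax\|_p^p$ includes the mass of every gadget touching $B_k$ or $B_l$, about $m$ gadgets. This swamps the signal of a single gadget by a factor of $m$. Naive differencing does not fix this, since each query separately incurs an error proportional to the total mass. To overcome it, we would make the other gadgets' contributions small relative to the target. One option is to scale row weights so that the rows meeting $B_k$ in other gadgets have tiny $\ell_p$ mass on queries concentrated on $B_k\cup B_l$. A cleaner option is a ``signal-in-cross-term'' design: the hidden bits enter only through terms that are degree-$\ge 1$ in both $x_{B_k}$ and $x_{B_l}$, and the query $x$ has the form $(x_{B_k},\delta x_{B_l})$ with $\delta$ tuned relative to $\eps$. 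The $n$-requirement $d^{\alpha_p}/\eps^{2+2\alpha_p}$ suggests the intended route is a random-rows/Lewis-weight-type construction, where the background $\|Ax\|_p^p$ is nearly isotropic, so it is essentially $\|x\|_2^p$ times a constant. In that setting background and signal can be compared uniformly, and I would try that first: take $A$ to be a random $\pm1$ matrix with a planted perturbation, and show the perturbation's effect on $\|Ax\|_p^p$ exceeds $\eps\|x\|_2^p$ for about $d^2/\eps^2$ independent directions.
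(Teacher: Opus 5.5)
\textbf{Verdict: there is a genuine gap, at step (3).} You correctly identify it as the main obstacle, but you leave it unresolved. With the block-pair layout it cannot be fixed by bookkeeping, because it eats exactly the factor you are trying to gain.

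\emph{Why step (3) fails.} Take the LWW gadget with $N_0$ row types, which carries about $N_0$ bits. The sign-dependent part of a query value is $\norm{R_i}_2\approx\sqrt{N_0}$, while the value itself is about $N_0$, both up to $\polylog$ factors; this is why $N_0\approx\eps^{-2}$ there. A query supported on $B_k\cup B_l$ also sees the $\Theta(m)$ other gadgets indexed by pairs containing $k$ or $l$. Even if their contributions could be cancelled exactly by combining queries, each query carries adversarial additive error of about $\eps mN_0$. Differencing or interpolating queries only adds these errors up, weighted by the $\ell_1$-norm of the coefficients. Making one half of the query small just shifts this mass to the gadgets on the other block.

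Decoding therefore forces $\sqrt{N_0}\gtrsim\eps mN_0$, i.e.\ $N_0\lesssim(\eps m)^{-2}$. The total $\binom{m}{2}N_0$ then collapses to $O(\eps^{-2})$. Rescaling gadgets by weights $\lambda_{kl}$ does not help: the requirement $\lambda_{kl}\sqrt{N_0}\gtrsim\eps N_0\sum_{l'}\lambda_{kl'}$, summed over $l$, gives the same bound. Your final suggestion, a random $\pm1$ matrix with a planted perturbation, specifies neither the encoding nor the decoder. It also has no mechanism producing the $\eps^{-2}$ factor, which must come from $p\notin2\Z$: for $p=2$, storing $A^\top A$ costs $O(d^2\log(nd))$ bits for every $\eps$.

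\emph{The missing idea.} You need to put $\Theta(d^2)$ bits on each index without adding $\ell_p$-mass. The paper does this by sharing rows instead of multiplying gadgets. It keeps a single hypercube gadget on $k\approx2\log_2(1/\eps)$ coordinates, with kernel $M_{ij}=\lvert\ip{i}{j}\rvert^{p-2}$. It replaces each scalar sign $s_i$ ($i\in I$) by a sign matrix $S_i\in\{-1,1\}^{s\times s}$, $s\approx d/2$. These are planted in the near-identity matrices $C_j=I_{2s}+\eta\begin{pmatrix}0&X_j\\X_j^\top&0\end{pmatrix}$, where $X_j=\sum_{i\in I}\frac{(R_i)_j}{\norm{R_i}_2}\frac{S_i}{\sqrt{s}}$.

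The rows are $(g_\ell j^\top,h_\ell^\top C_j^{1/2})$, where $(g_\ell,h_\ell)$ are the rows of an embedding of $\ell_2^{2s+1}$ into $\ell_p^{m_0}$. This gives $\norm{A(v,\sqrt{\tau}w)}_p^p=(1\pm\eps)\sum_{j}(\ip{j}{v}^2+\tau w^\top C_jw)^{p/2}$, which has size about $2^kk^{p/2}$ independently of $d$.

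Your cross-term instinct is then the right one. The $\tau$-derivative at $v=i$ is recovered from $O(\log(1/\eps))$ queries by Chebyshev interpolation, at only $\polylog$ cost. It equals $\frac{p}{2}\bigl(b_0+\eta\,w^\top\begin{pmatrix}0&Y_i\\Y_i^\top&0\end{pmatrix}w\bigr)$, with $b_0$ known and $Y_i=\norm{R_i}_2S_i/\sqrt{s}+E_i$. The eigenvalue bound $\lvert\lambda_{k,p}\rvert\gtrsim\sqrt{2^k/k}$ is where $p\notin2\Z$ enters, and it gives $\norm{R_i}_2\gtrsim2^{k/2}/k^{3/4}$. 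Non-commutative Khintchine keeps $\norm{X_j}_{\op}$ and $\norm{E_i}_{\op}/\norm{\widetilde{M}_i}_2$ bounded for most sign choices.

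Decoding is by packing over the continuum $w\in\Sph^{2s-1}$ using the for-all guarantee. If two collections differ in $c_0s^2$ entries of some $S_i$, their quadratic forms differ by $2\sqrt{c_0}$ at some $w$, which exceeds the error. This yields $2^{\Omega(\lvert I\rvert s^2)}$ distinguishable inputs.

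Finally, the requirement on $n$ is simply the $2^km_0$ rows of this construction, with $m_0$ the size of the $\ell_2\to\ell_p$ embedding. No Lewis-weight construction is involved.
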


Following the same approach as in \cite{LWW21} to convert a bit lower bound to a dimension lower bound, we obtain the following theorem.
\begin{theorem}\label{thm:dimension_lower_bound}
	Let $p\in [1, \infty) \setminus 
	2\mathbb{Z}$ be a constant.
	For $d=\Omega_p(\log(1/\eps))$, it holds that
	\[
		N_p(d,\epsilon) \gtrsim_{p} \frac{d}{\eps^2 \polylog(d/\eps)}.
	\]
\end{theorem}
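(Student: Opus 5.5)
We derive Theorem~\ref{thm:dimension_lower_bound} from Theorem~\ref{thm:subspace_sketch_lower_bound} by the encoding argument of \cite{LWW21}. If $N_p(d,\eps)$ is small, then any $n\times d$ matrix $A$ can be stored in few bits. Comparing this with the $\Omega(d^2/(\eps^2\polylog(1/\eps)))$ bit lower bound will force $N \gtrsim d/(\eps^2\polylog(d/\eps))$.

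\textbf{Step 1: the data structure.} Fix $d=\Omega_p(\log(1/\eps))$ and let $A$ be a hard instance from Theorem~\ref{thm:subspace_sketch_lower_bound}, with $n=\widetilde\Omega_p(d^{\alpha_p}/\eps^{2+2\alpha_p})$ rows and $O(\log(nd))$-bit entries. Set $N=N_p(d,\eps/c)$ for a suitable constant $c$. By definition there is $\Phi$ with $\|\Phi Ax\|_p=(1\pm\eps/c)\|Ax\|_p$ for all $x$, so $\|\Phi Ax\|_p^p=(1\pm\eps)\|Ax\|_p^p$ once $c\gtrsim p$. The data structure stores $B=\Phi A\in\R^{N\times d}$ and answers $Q_p(x)=\|Bx\|_p^p$. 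This works deterministically, so it certainly succeeds with probability $0.9$.

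\textbf{Step 2: discretize $B$ to bound the bit count.} We must show that $B$ can be rounded to $O(Nd\log(nd/\eps))$ bits while keeping a $(1\pm\eps)$-approximation.

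First normalize. Replace $B$ by a well-conditioned version $B=UR$, where $U$ is an $\ell_p$ well-conditioned basis (e.g.\ an Auerbach/Lewis basis) of $\range(B)$. Then $\|Bx\|_p$ and $\|Rx\|_2$ are comparable up to a factor $\poly(d)$ for all $x$. Because $A$ has integer entries of $O(\log(nd))$ bits and full rank, the singular values of $A$, and hence of $R$, lie in $[(nd)^{-O(d)},(nd)^{O(1)}]$.

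Next round. Round each entry of $U$ to additive precision $\delta=\eps\,\poly(1/(nd))^{d}$ and each entry of $R$ to relative precision of the same order. Then for every $x$,
\[
\|\widetilde U\widetilde Rx\|_p=(1\pm\eps)\|Bx\|_p .
\]
Since $\log(1/\delta)=O(d\log(nd/\eps))$, the total storage is $O(Nd\cdot d\log(nd/\eps))$ bits, or $O(Nd\log(nd/\eps))$ with a cleaner conditioning argument.

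This conditioning and rounding step is the main obstacle. The point is to avoid an extra factor $d$ in the bit count. If one is lost, a factor of $d$ disappears from the final bound. For that reason I would aim for the $O(\log(nd/\eps))$-bit-per-entry version, using the well-conditioned basis so that entries are bounded and the relative error on $\|Bx\|_p$ is controlled uniformly in $x$.

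\textbf{Step 3: compare bounds.} Theorem~\ref{thm:subspace_sketch_lower_bound} gives
\[
Nd\log(nd/\eps)\gtrsim \frac{d^2}{\eps^2\polylog(1/\eps)} .
\]
With $n=\poly(d/\eps)$ we have $\log(nd/\eps)=O_p(\log(d/\eps))$, and therefore
\[
N_p(d,\eps/c)\gtrsim_p \frac{d}{\eps^2\polylog(d/\eps)} .
\]
Rescaling $\eps$ by the constant $c$ only changes constants, which gives the theorem.
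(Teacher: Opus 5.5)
Your architecture (store $B=\Phi\widetilde A$, round it, compare with Theorem~\ref{thm:subspace_sketch_lower_bound}) is the same as the paper's. However, Step~2, which you yourself flag as the crux, is not actually proved.

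The argument you write down relies only on the generic bound $\sigma_{\min}\ge (nd)^{-O(d)}$ for full-rank integer matrices. So it needs precision $\eps(nd)^{-O(d)}$, i.e.\ $O(d\log(nd/\eps))$ bits per entry. The resulting $O(Nd^2\log(nd/\eps))$-bit sketch yields only $N\gtrsim 1/(\eps^2\polylog(1/\eps))$, which is exactly the bound of \cite{LWW21} that this theorem is meant to improve.

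A more careful accounting of your change-of-basis scheme does not rescue it. The Auerbach factor $U$ indeed needs only $O(\log(Nd/\eps))$ bits per entry. But the $d\times d$ factor $R$, rounded at precision $(nd)^{-O(d)}$, still costs $O(d^3\log(nd/\eps))$ bits, which exceeds the $\widetilde\Omega(d^2/\eps^2)$ target once $d\gtrsim\eps^{-2}$. The ``cleaner conditioning argument'' you invoke is precisely the missing step.

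What is missing is that no generic bound or change of basis is needed: the specific hard instance is already well-conditioned in the standard coordinates. This is what the paper uses, together with the rounding of \cite{LWW21}.
\begin{itemize}
\item Section~\ref{sec:rounding} gives $\|Ax\|_p^p\ge c_p2^k\|x\|_2^p$, which survives the passage to $\widetilde A$.
\item $\widetilde A$ has $n_0=\poly(d/\eps)$ rows and entries of magnitude $\poly(d/\eps)$.
\item Hence $\|x\|_2\lesssim_p\|\widetilde Ax\|_p\le\poly(d/\eps)\|x\|_2$.
\item Consequently every column of $B$ satisfies $\|Be_j\|_p\le 2\|\widetilde Ae_j\|_p\le\poly(d/\eps)$, so all entries of $B$ are $\poly(d/\eps)$.
\item Rounding these entries to a grid of spacing $c_p\eps/(N^{1/p}\sqrt d)$ changes $\|Bx\|_p$ by at most $c_p\eps\|x\|_2\lesssim_p c_p\eps\|\widetilde Ax\|_p$.
\end{itemize}
This costs $O(\log(Nd/\eps))$ bits per entry, hence $O(Nd\log(Nd/\eps))$ bits in total. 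The extra $\log N$ only costs another logarithmic factor when solving for $N$. With this bound in hand, your Step~3 goes through.
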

When $p\in [1,2)$, this lower bound is tight up to logarithmic factors and the $\ell_p$-subspace embedding problem is almost settled. For $p>2$, proving a lower bound of $\widetilde{\Omega}(d^{p/2}/\eps^2)$ for $N_p(d,\eps)$ remains open.

\subsection{Review of the previous construction}
Before proving Theorem~\ref{thm:subspace_sketch_lower_bound}, we first recall the strategy underlying the earlier $\tilde{\Omega}(d/\eps^2)$ lower bound of \cite{LWW21} for the for-each $\ell_p$-subspace sketch problem. For easier comparison, we shall describe the construction for a for-all version.

The argument begins with a basic hard instance $A\in \R^{2^k\times k}$, where $k$ is even and $2^k=\widetilde{\Theta}_p(\eps^{-2})$, for which any $\ell_p$-subspace sketch requires $2^k/\poly(k) = \tilde{\Omega}(\eps^{-2})$ bits. To extend the construction to general dimension $d$, one places independent copies of $A$ along the diagonal of a block-diagonal matrix with $d$ columns. We therefore review the lower-bound argument for this basic instance.

Let the index universe $U = \{-1,1\}^k$. Let $M\in\R^{2^k\times 2^k}$ with $M_{i,j} = |\langle i,j\rangle|^p$ for $i,j\in U$.  The matrix $M$ has $r = \binom{k}{k/2}$ eigenvalues $\lambda$ satisfying $|\lambda| \gtrsim_p \sqrt{2^k/k}$. The normalized Hadamard matrix $H$ diagonalizes $M$; after permuting the columns of $H$, assume that the first $r$ diagonal entries of
$H^\top MH$ are all $\lambda$. Define the spectrum-truncated version of $M$ as
\[
	\widetilde{M} = H \diag\{\underbrace{\lambda,\dots,\lambda}_{r\text{ copies}},0,\dots,0\} H^\top.
\]
One then selects a subset $I\subseteq U$ with $|I| = \Omega(r/\log r)$ and decomposes $\widetilde{M}_i = R_i + P_i$ for each $i\in I$ such that each $R_i\in\range(\widetilde{M})$, $\{R_i\}_{i\in I}$ are orthogonal, $\norm{R_i}_2\geq 0.99\|\widetilde{M}_i\|_2$, and $P_i$ is a small projection onto the span of the other rows $R_j$ ($j\neq i$).
The message encoded in the hard instance is the sign vector $(s_i)_{i\in I}\in\{-1,1\}^I$. Given this sign vector, define a vector 
\begin{equation}\label{eqn:old_x}
	x = \sum_{i\in I} s_i \frac{R_i}{\norm{R_i}_2}.
\end{equation}
For an appropriately chosen $\Delta$, with constant probability over uniformly random signs $s_i$, both $x_j+\Delta\geq 0$ for all $j\in U$ and
$|\langle P_i,x\rangle|\leq (1/10)\norm{R_i}_2$ for all $i\in I$ hold simultaneously. For every such sign vector, define weights $y_j=(x_j+\Delta)^{1/p}$ and rows $A_j=y_j\cdot j^\top$ for $j\in U$.

For a query vector $i \in I$, $(Ai)_j = y_j \ip{j}{i}$, and hence 
\[
	\norm{Ai}_p^p = \sum_j (x_j + \Delta)|\ip{j}{i}|^p = (Mx)_i + \Delta(M\mathbf{1})_i.
\]
The second term $\Delta(M\mathbf{1})_i$ is known exactly. Since $x\in\range(\widetilde{M})$ and $M$ and $\widetilde{M}$ agree on this subspace, it holds that $Mx = \widetilde{M}x$ and so
\begin{equation}\label{eqn:(Mx)_i_old}
	(Mx)_i = s_i \norm{R_i}_2 + \langle P_i, x\rangle.
\end{equation}
Hence there is a set $\mathcal{S}\subseteq\{-1,1\}^{I}$ of size $2^{\Omega(|I|)}$ for which both properties hold. Consequently, for every $(s_i)_{i\in I}\in\mathcal S$, an $(1\pm\eps)$-approximation $Q_p(i)$ to $\norm{Ai}_p^p$ allows one to recover $(Mx)_i$ approximately, and so one can recover $s_i$ from
\[
	s_i = \sgn(Q_p(i) - \Delta(M\mathbf{1})_i).
\]
Distinct sign vectors in $\mathcal{S}$ therefore require distinct contents stored in the data structure. Hence the data structure must have at least
\[
	\Omega(|I|) = \Omega\mleft(\frac{r}{\log r}\mright) = \Omega\mleft(\frac{2^k}{\poly(k)}\mright) = \Omega\mleft(\frac{1}{\eps^2\polylog(1/\eps)}\mright)
\]
bits.

\subsection{Our modification}
The starting point of our construction is the analogous spectral construction for a related matrix $M\in\R^{2^k\times 2^k}$, with $2^k = \widetilde{\Theta}_p(1/\eps^2)$. The key change is that each scalar sign $s_i$ is replaced by an $s\times s$ sign matrix $S_i$, where $s=(d-k)/2$. Thus, instead of encoding one bit at each index $i\in I$, the hard instance encodes $\Theta(s^2)$ bits at each such index.

To make these matrix-valued signs visible to norm queries, we introduce
\[
	C_j
	=
	I_{2s}
	+
	\eta
	\begin{pmatrix}
		0 & X_j\\
		X_j^\top & 0
	\end{pmatrix},
	\qquad\text{where}\quad
	X_j
	=
	\sum_{i\in I}\frac{(R_i)_j}{\norm{R_i}_2}\frac{S_i}{\sqrt{s}}.
\]
Note that the family $\{X_j\}_{j\in U}$ is the matrix-valued analogue of the vector $x$ in \eqref{eqn:old_x} for the previous construction.

The hard matrix $A$ is built so that its $\ell_p$-norms approximate the analytic quantity
\[
	F_S(v,\tau,w)
	=
	\sum_{j\in U}
	\mleft(\ip{j}{v}^2+\tau w^\top C_jw\mright)^{p/2}.
\]
More precisely, for query vectors of the form $x=(v,\sqrt{\tau}w)\in\R^k\oplus \R^{2s}$,
\[
	\norm{Ax}_p^p = (1\pm\eps)F_S(v,\tau,w).
\]
Thus the sketch gives approximate values of $F_S$. By evaluating these values at a number of carefully chosen values of $\tau$, one can approximate the derivative $\partial_\tau F_S(i,0,w)$.

For $v=i\in I$, this derivative takes the form
\[
	\partial_\tau F_S(i,0,w)
	=
	\frac p2 b_0
	+
	\frac p2 \eta\, w^\top
	\begin{pmatrix}
		0 & Y_i\\
		Y_i^\top & 0
	\end{pmatrix}
	w,
\]
where $b_0$ is independent of $i$ and can be computed exactly, and
\[
	Y_i
	=
	\norm{R_i}_2\frac{S_i}{\sqrt{s}}+E_i
\]
is the analogue of $(Mx)_i$ in \eqref{eqn:(Mx)_i_old}, where $E_i$ is a small error term. Varying $w$ gives access to the quadratic form of the block matrix above. Using that $E_i$ is small, one can show that accurate estimates of these quadratic forms for all $i\in I$ and $w\in\Sph^{2s-1}$ distinguish the members of a family of $2^{\Omega(|I|s^2)}$ sign-matrix collections. Since $|I|=\widetilde{\Omega}_p(\eps^{-2})$ and $s=\Theta(d)$, this gives the desired $\widetilde{\Omega}_p(d^2/\eps^2)$ bit lower bound.

\section{Hard Instance}
As previewed, we shall replace signs $s_i$ in the construction of \cite{LWW21} with sign matrices $S_i$ and aim to recover the sign matrices $S_i$. Fix a sufficiently small absolute constant $\theta>0$. Specifically, let $L\asymp \log(1/\eps)$ and choose $k=4h+1$ to be the largest integer of this form satisfying, for a sufficiently small constant $c_{p,\theta}>0$ depending only on $p$ and $\theta$,
\begin{equation}\label{eqn:assumption_k}
	2^k \leq \frac{c_{p,\theta}}{\eps^2 L^4 k^{p+3/2}}.
\end{equation}
Let $s = \lfloor (d-k)/2\rfloor$. Our query space is $\R^{k+2s} = \R^k \oplus \R^{2s}$. If $k + 2s < d$, we can append a zero column. To simplify the notation, we assume that $d = k+2s$ below. By adjusting the constants, the assumption that $d\gtrsim_p \log(1/\eps)$ implies that $s = \Theta(d)$.

Let $M\in\R^{2^k\times 2^k}$ be defined as $M_{i,j} = |\ip{i}{j}|^{p-2}$ for $i,j\in U$. Since $k$ is odd, $\ip{i}{j}\neq 0$ for all $i,j\in U$, so $M$ is well defined. Let $m = (k-1)/2$. 	Similarly to \cite{LWW21}, we have the following property of $M$.

\begin{lemma}\label{lem:eigenvalue}
Let $p\in [1,\infty)\setminus 2\mathbb{Z}$. The matrix $M$ has
\[
	r = \binom{k}{m}
\]
eigenvalues equal to $\lambda_{k,p}$, where
\[
	|\lambda_{k,p}| \gtrsim_p \sqrt{\frac{2^k}{k}}.
\]
\end{lemma}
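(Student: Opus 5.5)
The approach is the standard Fourier/Krawtchouk analysis on the hypercube. Since $M_{i,j}=f(\ip{i}{j})$ with $f(t)=|t|^{p-2}$, and $\ip{i}{j}=k-2\,\mathrm{dist}(i,j)$, the matrix $M$ is a convolution operator on the group $\{-1,1\}^k$. Hence it is diagonalized by the characters $\chi_T(j)=\prod_{t\in T}j_t$, $T\subseteq[k]$, which are the columns of the Hadamard matrix. The eigenvalue for $\chi_T$ depends only on $|T|=\ell$ and equals
\[
	\lambda_\ell=\sum_{j\in U} |\ip{\mathbf 1}{j}|^{p-2}\chi_T(j)=\sum_{w=0}^k |k-2w|^{p-2}K_\ell(w),
\]
where $K_\ell$ is the Krawtchouk polynomial. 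I would take $\ell=m=(k-1)/2$, whose multiplicity is $\binom{k}{m}=r$, and set $\lambda_{k,p}=\lambda_m$. It then remains to lower bound $|\lambda_m|$.

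To evaluate $\lambda_m$, I would use the standard integral representation
\[
	|t|^{p-2}=c_p\int_0^\infty (1-\cos(ut))\,u^{-(p-1)}\,du
\]
for $0<p-2<2$. For general non-even $p$, I would use higher-order differences, e.g.\ $\int(\cos(ut)-\sum_{q<\lceil (p-2)/2\rceil}(-1)^q(ut)^{2q}/(2q)!)u^{-(p-1)}du$. For $1\le p<2$, where the exponent is negative, I would use instead $|t|^{p-2}=c_p\int_0^\infty \cos(ut)u^{1-p}\,du$. The polynomial correction terms are polynomials in $\ip{\mathbf 1}{j}$ of degree $<m$ once $k$ is large compared with $p$, so they contribute nothing to the eigenvalue at level $m$. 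The cosine part has the product form
\[
	\E_j e^{iu\ip{\mathbf 1}{j}}\chi_T(j)=(i\sin u)^{m}(\cos u)^{k-m}.
\]
This gives
\[
	\lambda_m=2^k c_p\, \mathrm{Re}\Big(i^m\int_0^\infty \sin^m u\,\cos^{m+1}u\; u^{-(p-1)}du\Big),
\]
up to sign. The integrand is $\pi$-periodic apart from the power weight, so I would sum over periods via the Poisson/Dirichlet-series trick. The result is a Fourier series in $u$ whose coefficients are binomial coefficients of size about $2^{-k}\binom{k}{\cdot}$, multiplied by $\Gamma$-factors and a factor $\sin(\pi p/2)$ or $\cos(\pi p/2)$. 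That trigonometric factor is nonzero exactly because $p\notin 2\mathbb Z$; this is where the hypothesis enters.

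Alternatively, and more likely matching \cite{LWW21}, I would invoke the computation done there for the kernel $|\ip{i}{j}|^{p}$ at level $k/2$ with $k$ even. Since $|t|^{p-2}=|t|^{p'}$ with $p'=p-2\notin 2\Z$, that computation applies directly, but it has to be redone for odd $k$ and level $m$. The closed form should be
\[
	\lambda_m\approx C_p\,\frac{2^k}{\sqrt{k}}\cdot k^{(p-2)/2}\cdot\Theta_p(1)\cdot\Big(\text{a ratio }\tbinom{k}{m}2^{-k}\sqrt k=\Theta(1)\Big).
\]
The overall scaling is sensible: by Parseval, the eigenvalues weighted by multiplicity have total square $2^k\sum_j |\ip{\mathbf 1}{j}|^{2p-4}\approx 4^k k^{p-2}$. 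The claimed bound $\sqrt{2^k/k}$ is only up to $p$-dependent factors, so for $p\ge 1$ I would simply check that the resulting power of $k$ is at least $k^{-1/2}$ times $\sqrt{2^k}$. The actual magnitude is much larger, about $2^k k^{(p-3)/2}$. If that heuristic is off, the fallback is to do the asymptotics exactly via Stirling's formula.

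The main obstacle is the exact asymptotic evaluation of the oscillatory integral, or equivalently the alternating Krawtchouk sum, at the middle level $m$. One must show there is no cancellation beyond the $\sin(\pi p/2)$-type factor, and handle the singularity of $|t|^{p-2}$ at $t=0$ when $p<2$. That singularity is harmless because $k$ is odd, so $|t|\ge 1$ throughout and only odd $t$ occur. The cleanest route is probably to write $\lambda_m$ as a finite difference of order $\approx k$ of the function $t\mapsto|t|^{p-2}$ restricted to odd integers. This order-$m$ difference can then be compared with $\Gamma$-function expressions by analytic continuation in $p$.
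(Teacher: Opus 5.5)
You correctly reduce the problem to the paper's own integral. With $m=2h$, so that $i^m=(-1)^h$, the product formula $(i\sin u)^m(\cos u)^{m+1}$ combined with the Gamma-integral representation of $|t|^{p-2}$ gives exactly $\lambda_{k,p}=(-1)^h c_q 2^k I_q$, where $q=p-2$ and $I_q=\int_0^\infty \sin^{2h}t\cos^{2h+1}t\,t^{-(q+1)}\,dt$. But the whole content of the lemma is the lower bound on $I_q$, and that step is missing. You call it the ``main obstacle'' and list candidate routes (Poisson summation, redoing the even-$k$ computation, Stirling, finite differences) without carrying any of them out.

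Moreover, your predicted size $|\lambda_m|\approx 2^k k^{(p-3)/2}$ is off by an exponential factor and contradicts the Parseval identity you quote. It would give $\binom{k}{m}\lambda_m^2\approx 8^k k^{p-7/2}$, whereas $\sum_T\lambda_T^2=\norm{M}_F^2\le 4^k k^{O(1)}$. The true order is $2^{k/2}/\sqrt{k}$ up to $p$-dependent constants. This is because $|\sin^{2h}t\cos^{2h+1}t|\lesssim 2^{-k/2}$, with peaks of width $\asymp 1/\sqrt{k}$ at $t=\pi/4+n\pi/2$. So the stated bound is essentially sharp, and there is no slack for ``checking the power of $k$''. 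You must genuinely rule out cancellation between these alternating-sign peaks.

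The missing idea is the paper's alternating-series argument. Put $\phi(t)=\sin^{2h}t\cos^{2h+1}t$ and $a=p-1>0$. Since $\phi(t+\pi)=\phi(\pi-t)=-\phi(t)$, splitting $[0,\infty)$ into intervals of length $\pi$ gives $I_q=\sum_{j\ge 0}(-1)^j I_q^{(j)}$, where
\[
I_q^{(j)}=\int_0^{\pi/2}\bigl((j\pi+t)^{-a}-((j+1)\pi-t)^{-a}\bigr)\phi(t)\,dt .
\]
Convexity of $x\mapsto x^{-a}$ makes $I_q^{(j)}$ positive and decreasing in $j$. Hence $I_q\ge I_q^{(0)}-I_q^{(1)}=\int_0^{\pi/2} g\phi\,dt$ with $g\ge 0$ on $[0,\pi/2]$. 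Restricting to $|t-\pi/4|\le c/\sqrt{h}$, where $\phi\gtrsim 2^{-k/2}$ and $g\gtrsim_p 1$, yields $I_q\gtrsim_p 2^{-k/2}/\sqrt{k}$. Since $c_q\neq 0$ precisely because $p\notin 2\Z$, this proves the lemma for $p>1$.

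Finally, your representation $\int_0^\infty \cos(ut)\,u^{1-p}\,du$ diverges at $p=1$, and its prefactor blows up there, so that endpoint needs separate treatment. The paper lets $p\searrow 1$ in the finite-sum formula for $\lambda_{k,p}$, using $c_q\asymp 1/(q+1)$ and $g\gtrsim q+1$ near $\pi/4$ to keep the constant uniform.
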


The normalized Hadamard matrix $H\in \R^{2^k\times 2^k}$ diagonalizes $M$; after permuting
the columns of $H$, assume that the first $r$ diagonal entries of $H^\top M H$ are all
$\lambda_{k,p}$. Define the spectrum-truncated version of $M$ as
\[
	\widetilde{M} = H\diag\{\underbrace{\lambda_{k,p},\dots,\lambda_{k,p}}_{r\text{ copies}}, 0,\dots,0\}H^\top.
\]

The following lemma is a parameterized version of \cite[Lemma 3.6]{LWW21}, which follows the same proof.
\begin{lemma}\label{lem:decomposition}
	For every sufficiently small $\delta > 0$, there is a subset $I\subseteq U$ with $|I|\gtrsim \delta^2 r$ such that the $i$-th row of $\widetilde{M}$ ($i\in I$) can be decomposed as 
	\[
		\widetilde{M}_i = R_i + P_i,
	\]
	where each $R_i$ lies in $\range(\widetilde{M})$, $\{R_i\}_{i\in I}$ is a set of orthogonal vectors, $\norm{R_i}_2\geq \sqrt{1-\delta^2}\|\widetilde{M}_i\|_2$, $P_i$ is the orthogonal projection of $\widetilde{M}_i$ onto the subspace spanned by $\{R_j\}_{j\in I\setminus \{i\} }$ and $\norm{P_i}_2\leq \delta\|\widetilde{M}_i\|_2$.
\end{lemma}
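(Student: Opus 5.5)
The plan is to run the greedy/Gram–Schmidt argument of \cite[Lemma 3.6]{LWW21}, keeping $\delta$ as a free parameter.

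\textbf{Setup.} Write $\widetilde{M} = \lambda_{k,p} H_r H_r^\top$, where $H_r$ consists of the first $r$ columns of $H$. Then $\widetilde{M}$ is $\lambda_{k,p}$ times the orthogonal projection onto a subspace $V$ of dimension $r$. Every row satisfies $\widetilde{M}_i = \lambda_{k,p} \Pi_V e_i$, and $\|\widetilde{M}_i\|_2^2 = \lambda_{k,p}^2 (H_rH_r^\top)_{ii} = \lambda_{k,p}^2 r/2^k$, since the entries of $H$ all have magnitude $2^{-k/2}$. So all rows have the same norm, say $\rho$, and they span $V$.

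\textbf{Greedy selection.} Build $I$ one index at a time. Suppose $I$ has been chosen with $|I| = t$, and let $W_t = \operatorname{span}\{\widetilde{M}_j : j\in I\}$, of dimension $t$. Then
\[
	\sum_{i\in U} \|\Pi_{W_t}\widetilde{M}_i\|_2^2 = \lambda_{k,p}^2\operatorname{tr}(\Pi_{W_t}\Pi_V) = \lambda_{k,p}^2 t,
\]
because $\sum_i \Pi_V e_i e_i^\top \Pi_V = \Pi_V$ and $W_t\subseteq V$. The average of $\|\Pi_{W_t}\widetilde{M}_i\|_2^2/\rho^2$ over $i\in U$ is therefore $\lambda_{k,p}^2 t/(2^k\rho^2) = t/r$. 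Markov's inequality gives at least $2^k/2$ indices with $\|\Pi_{W_t}\widetilde{M}_i\|_2^2 \le 2t\rho^2/r$. This is at most $\delta'^2\rho^2$ whenever $t\le \delta'^2 r/2$, so we can keep adding a new index until $|I|\asymp \delta'^2 r$.

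\textbf{The main obstacle: the decomposition condition is symmetric.} We need $P_i$, the projection of $\widetilde{M}_i$ onto the span of $\{R_j\}_{j\ne i}$, to be small for \emph{every} $i\in I$, including indices added early. Greedy only controls the projection of each new row onto the earlier ones. My fix is to use Gram-matrix conditioning. Let $G$ be the Gram matrix of $\{\widetilde{M}_i/\rho\}_{i\in I}$.

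If $\|G - I\|_{\op}\le \delta^2/4$, then the dual (biorthogonal) construction does the job. Set $R_i = \widetilde{M}_i - \Pi_{W^{(i)}}\widetilde{M}_i$, where $W^{(i)} = \operatorname{span}\{\widetilde{M}_j : j\in I, j\ne i\}$. The $R_i$ are pairwise orthogonal; this follows from $R_i\perp W^{(i)}$, and I would make it precise by taking $R_i$ proportional to the rows of $G^{-1}$ applied to the system. We then have $\|\Pi_{W^{(i)}}\widetilde{M}_i\|_2^2 \le \|G-I\|_{\op}\rho^2\cdot O(1)$, so $\|P_i\|_2\le\delta\rho$ and $\|R_i\|_2^2 = \rho^2 - \|P_i\|_2^2 \ge (1-\delta^2)\rho^2$.

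To get operator-norm control, I would pick $I$ randomly rather than greedily. Sample each index of $U$ independently with probability $q \asymp \delta^2 r/2^k$. The off-diagonal entries of $G$ have $\sum_{j} \ip{\widetilde{M}_i}{\widetilde{M}_j}^2/\rho^4 = \lambda^2\rho^2/\rho^4 = 2^k/r$. A second-moment (Frobenius) bound on the sampled off-diagonal part, followed by deleting the rows with large off-diagonal mass (alteration), then leaves $|I|\gtrsim\delta^2 r$ with $\|G-I\|_F\lesssim\delta$. This is the step I expect to need the most care: it is essentially the LWW21 sampling argument, and there $\delta$ must enter quadratically in the size of $I$.
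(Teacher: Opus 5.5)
As written, the proposal does not prove the lemma, but the gap is self-inflicted. Your greedy step, followed by Gram--Schmidt in the order of selection, already proves the lemma; this is the argument the paper inherits from LWW21. The ``main obstacle'' is not real. The lemma only asks that $P_i$ be the projection of $\widetilde M_i$ onto $\operatorname{span}\{R_j\}_{j\ne i}$, and the $R_j$ are yours to choose; nothing is required about the span of the other \emph{rows}.

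List $I=\{i_1,\dots,i_T\}$ in the order the greedy picks them, and set $P_{i_t}=\Pi_{W_{t-1}}\widetilde M_{i_t}$ and $R_{i_t}=\widetilde M_{i_t}-P_{i_t}$. These $R$'s have three properties:
\begin{enumerate}
\item They are pairwise orthogonal: for $u>t$, $R_{i_u}\perp W_{u-1}\ni R_{i_t}$.
\item They lie in $\range(\widetilde M)$.
\item They span the same subspaces as the rows: $W_{t-1}=\operatorname{span}\{R_{i_1},\dots,R_{i_{t-1}}\}$.
\end{enumerate}
Hence $P_{i_t}\in\operatorname{span}\{R_j\}_{j\ne i_t}$, while $R_{i_t}$ is orthogonal to every $R_j$ with $j\ne i_t$. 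So $P_{i_t}$ is exactly the required projection, and the later rows contribute nothing to it. Your greedy bound $\|\Pi_{W_{t-1}}\widetilde M_{i_t}\|_2\le\delta\rho$ therefore gives $\|P_i\|_2\le\delta\|\widetilde M_i\|_2$ for \emph{every} $i\in I$, early or late. It also gives $\|R_i\|_2^2=\rho^2-\|P_i\|_2^2\ge(1-\delta^2)\rho^2$, with $|I|\gtrsim\delta^2 r$.

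The replacement argument fails at two concrete points.

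First, the vectors $R_i=\widetilde M_i-\Pi_{W^{(i)}}\widetilde M_i$ are biorthogonal to the other rows but not orthogonal to each other. They are rescalings of the dual vectors $\sum_j(G^{-1})_{ji}\widetilde M_j$, whose Gram matrix is $\rho^2G^{-1}$. So $\langle R_i,R_j\rangle\ne0$ whenever $(G^{-1})_{ij}\ne0$. For example, $u_1=(1,0)$ and $u_2=(1,1)/\sqrt2$ give $R_1=(1/2,-1/2)$ and $R_2=(0,1/\sqrt2)$. Moreover, $\operatorname{span}\{R_j\}_{j\ne i}\ne W^{(i)}$ in general, so your $P_i$ is not the projection the lemma asks for. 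You would need an actual orthogonalization of the rows, such as Gram--Schmidt or the symmetric one via $G^{-1/2}$.

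Second, the sampling estimate is wrong. With $q\asymp\delta^2r/2^k$, the expected off-diagonal Frobenius mass of $G$ is about $q^2\,4^k/r\asymp\delta^4 r$, not $O(\delta^2)$; only the mass \emph{per row} is $\asymp\delta^2$. That per-row mass means a typical row of $G-I$ already has norm about $\delta$. So the target $\|G-I\|_{\op}\le\delta^2/4$ is out of reach for a random $I$ of size $\asymp\delta^2 r$, and a Frobenius bound of order $\delta$ would not imply it anyway. None of this machinery is needed.
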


Let $I$ be a set given by the preceding lemma with $\delta$ a sufficiently small absolute constant.
For every $i\in I$, choose independently a random sign matrix $S_i\in \{-1,1\}^{s\times s}$ and define $W_i = (1/\sqrt{s})S_i$. For each $j\in U$, define
\[
	X_j = \sum_{i\in I} \frac{(R_i)_j}{\norm{R_i}_2} W_i.
\]
For each $j\in U$, we also define
\[
	C_j = 
	\begin{pmatrix}
		I & 0 \\
		0 & I 
	\end{pmatrix}
	+
	\eta
	\begin{pmatrix}
		0 & X_j \\
		X_j^\top & 0
	\end{pmatrix}.
\]

Note that
\[
	Y_i := \sum_{j\in U} \widetilde{M}_{i, j} X_j = \sum_{i' \in I} \left\langle \widetilde{M}_i, \frac{R_{i'}}{\norm{R_{i'}}_2}\right\rangle W_{i'} = \norm{R_i}_2 W_i + E_i,
\]
where
\[
	E_i = \sum_{i'\in I\setminus\{i\}} \left\langle \widetilde{M}_i, \frac{R_{i'}}{\norm{R_{i'}}_2}\right\rangle W_{i'}.
\]

Define
\[
	\cS = \left\{ \{S_i\}_{i\in I}: \max_{j\in U} \norm{X_j}_{\op} \leq C_1 \text{ and } \max_{i\in I} \frac{\norm{E_i}_{\op}}{\|\widetilde{M}_i\|_2} \leq C_2\delta \right\}.
\]
\begin{lemma}\label{lem:S_size}
	There exist absolute constants $C_1, C_2 > 0$ such that $|\cS|\geq 2^{\Omega(|I|s^2)}$.
\end{lemma}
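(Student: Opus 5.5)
We show that a uniformly random collection $\{S_i\}_{i\in I}$ lies in $\cS$ with probability at least $1/2$. Since there are $2^{|I|s^2}$ collections in total, this gives $|\cS|\geq 2^{|I|s^2-1} = 2^{\Omega(|I|s^2)}$. It therefore suffices to show that each of the two defining conditions fails with probability at most $1/4$, and then take a union bound.

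\textbf{The bound on $\norm{X_j}_{\op}$.} Fix $j\in U$ and write $a_i = (R_i)_j/\norm{R_i}_2$. Then $X_j = \frac{1}{\sqrt{s}}\sum_{i\in I} a_i S_i$, and its entries are i.i.d.\ symmetric subgaussian random variables with variance proxy $\sum_i a_i^2$. The vectors $R_i/\norm{R_i}_2$ are orthonormal, so $\sum_i a_i^2 \leq 1$: this is the squared norm of the projection of the standard basis vector $e_j$ onto their span. Standard bounds for the operator norm of a matrix with subgaussian entries (an $\epsilon$-net on $\Sph^{s-1}\times\Sph^{s-1}$ together with Hoeffding) give
\[
\Pr\mleft[\norm{X_j}_{\op} > C\sqrt{s}\cdot\tfrac{1}{\sqrt{s}}(1+t)\mright]\leq 2e^{-c s t^2}.
\]
We must union bound over $|U| = 2^k$ indices $j$. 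Since $s=\Theta(d)$ and $d\gtrsim_p\log(1/\eps)\gtrsim k$ (recall $2^k\approx \eps^{-2}$ up to logarithmic factors), we can take $t$ to be a constant. With the implied constant in $d\gtrsim_p \log(1/\eps)$ large enough, $2^k\cdot 2e^{-cst^2}\leq 1/4$. This yields an absolute constant $C_1$.

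\textbf{The bound on $\norm{E_i}_{\op}$.} Here $E_i = \frac{1}{\sqrt{s}}\sum_{i'\neq i} b_{i'}S_{i'}$ with $b_{i'} = \ip{\widetilde{M}_i}{R_{i'}/\norm{R_{i'}}_2}$. Because $P_i$ is the orthogonal projection of $\widetilde{M}_i$ onto the span of the orthonormal vectors $\{R_{i'}/\norm{R_{i'}}_2\}_{i'\neq i}$, we have
\[
\sum_{i'\neq i} b_{i'}^2 = \norm{P_i}_2^2\leq \delta^2\|\widetilde{M}_i\|_2^2
\]
by Lemma~\ref{lem:decomposition}. So $E_i/\|\widetilde{M}_i\|_2$ is again $s^{-1/2}$ times a matrix with independent subgaussian entries, now with variance proxy at most $\delta^2$. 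The same net argument gives $\norm{E_i}_{\op}\leq C_2\delta\|\widetilde{M}_i\|_2$, except with probability $2e^{-cs}$. A union bound over $|I|\leq 2^k$ indices succeeds for the same reason as before.

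The main obstacle is the union bound over exponentially many indices ($2^k\approx\eps^{-2}$), which requires tail bounds of the form $e^{-\Omega(s)}$ with $s\gtrsim k$. This is exactly where the hypothesis $d\gtrsim_p\log(1/\eps)$ is used. Everything else follows from orthonormality of the normalized $R_i$ and the bound on $\norm{P_i}_2$ from Lemma~\ref{lem:decomposition}.
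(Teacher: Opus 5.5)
Your proposal is correct and follows the paper's proof: orthonormality of the $R_i/\norm{R_i}_2$ and the bound $\norm{P_i}_2\leq\delta\|\widetilde{M}_i\|_2$ give tail bounds of the form $e^{-cst^2}$ for $\norm{X_j}_{\op}$ and $\norm{E_i}_{\op}/\|\widetilde{M}_i\|_2$. A union bound over at most $2^{k+1}$ events then succeeds because $s\gtrsim k$. The only difference is that you derive the tail bound from an $\eps$-net argument with Hoeffding, whereas the paper cites the non-commutative Khintchine inequality; your route is the more accurate justification, since that inequality taken literally gives only an $O(\sqrt{\log s})$ bound with a dimensional prefactor in the tail rather than $e^{-cst^2}$.
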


Suppose that $\eta$ is a sufficiently small absolute constant. By Lemma~\ref{lem:S_size}, for each such collection $\{S_i\}_{i\in I}$, we have $(3/4) I_{2s} \preceq C_j
\preceq (5/4)I_{2s}$ for all $j\in U$.

By \cite[Lemma 7.1]{LWW21} and the same net argument used in \cite[Lemma 7.2]{LWW21} after normalization there
exists
\[
	T\in\R^{m_0\times (2s+1)}, \qquad m_0 = O_p\mleft(\mleft(\frac{s\log(1/\eps)}{\eps^2}\mright)^{\max\{1,p/2\}}\mright)
\]
such that $\norm{Ty}_p^p = (1\pm\eps)\norm{y}_2^p$ for all $y\in\R^{2s+1}$, and with entries of magnitude at most $\poly(d/\eps)$. Write $T_\ell = (g_\ell, h_\ell)\in \R\oplus \R^{2s}$ for $\ell \in [m_0]$. Then we form a matrix $A$ of $2^k m_0$ rows, indexed by $(i,\ell)\in U\times [m_0]$, by	
\[
	A_{i,\ell} = (g_\ell i^\top, h_\ell^\top C_i^{1/2}) \in \R^{k}\oplus \R^{2s}.
\]
The hard matrix $\widetilde{A}$ is obtained from $A$ by the rounding described in Section~\ref{sec:rounding}.

\subsection{Analysis of Matrix $M$}
We prove Lemma~\ref{lem:eigenvalue} in this section.

\begin{proof}[Proof of Lemma~\ref{lem:eigenvalue}]
	Recall that $k=4h+1$ and $m=2h$. 
	The same argument as in \cite{LWW21} showed that there are $\binom{k}{m}$ eigenvalues of the same value and so we only prove the lower bound here. 
	
	Let $q = p-2$. We first assume that $p > 1$, so $q > -1$. Following the coefficient calculation in \cite[Lemma 3.2]{LWW21} with the odd-degree polynomial identity 
	\[
		(1 - x)^{2h} (1 + x)^{2h + 1} = (1 - x^2)^{2h} (1 + x)
	\]
	in place of $(1-x)^h (1+x)^h = (1-x^2)^{2h}$, we obtain that
	\begin{equation}\label{eqn:lambda_expression}
		\lambda_{k,p} = \sum_{i=0}^{2h} (-1)^i \binom{2h}{i} (|4h+1-4i|^q + |4h-1-4i|^q).
	\end{equation}
	Following the same argument in \cite{LWW21}, we have for $-1 < q < 2h$, 
	\[
		\lambda_{k,p} = (-1)^h c_q 2^k I_q,
	\]
	where
	\[
		c_q = -\frac{2\Gamma(q+1)}{\pi}\sin\frac{\pi q}{2},\qquad I_q = \int_0^\infty \frac{\sin^{2h}t \cos^{2h+1}t}{t^{q+1}}\, dt.
	\]
	It then suffices to lower bound the integral $I_q$. Note that
	\[
		I_q = \sum_{j=0}^\infty (-1)^j I_q^{(j)},
	\]
	where
	\[
		I_q^{(j)} = \int_0^{\pi/2} \left(\frac{1}{(j\pi+t)^{q+1}} - \frac{1}{((j+1)\pi+t)^{q+1}}\right) \sin^{2h}t \cos^{2h+1}t \, dt.
	\]
	Since $I_q^{(j)} > 0$ and $I_q^{(j)}$ is decreasing in $j$, we know that $I_q\geq I_q^{(0)} - I_q^{(1)}$. Write
	\[
		I_q^{(0)} - I_q^{(1)} = \int_0^{\pi/2} g(t) \sin^{2h}t \cos^{2h+1}t\, dt,
	\]
	where
	\[
		g(t) = \frac{1}{t^{q+1}} - \frac{1}{(\pi-t)^{q+1}} - \frac{1}{(\pi+t)^{q+1}} + \frac{1}{(2\pi-t)^{q+1}},\quad t\in \left[0,\frac{\pi}{2}\right].
	\]
	On the interval $[\pi/4-c_1/\sqrt{h}, \pi/4+c_1/\sqrt{h}]$, we have
	\[
		\sin^{2h}t \cos^{2h+1}t \geq \frac{c_2}{2^{2h}},\quad g(t)\geq c_{3,q}.
	\]
	Therefore,
	\[
		I_q \gtrsim_q \frac{2^{-k/2}}{\sqrt k}
	\]
	and so for $p>1$,
	\begin{equation}\label{eqn:lambda_lb}
		|\lambda_{k,p}| \gtrsim_p \frac{2^{k/2}}{\sqrt{k}}.
	\end{equation}
	
	Now we deal with $p=1$ (i.e.~$q=-1$), which follows from a limiting argument. Taking $p\searrow 1$ (i.e.~$q\searrow -1$) in \eqref{eqn:lambda_expression}, since it is a finite sum and is continuous in $q$, we have $\lambda_{k,p}\to \lambda_{k,1}$. One can show that $g(t)\geq c_4(q+1)$ for some absolute constant $c_4$ on the small neighbourhood around $\pi/4$, and it is easy to verify using the expression of $c_q$ that $c_q\asymp 1/(q+1)$. Hence, the hidden constant in \eqref{eqn:lambda_lb} can be made an absolute constant when $p$ is close to $1$. This proves the lemma for $p=1$.
\end{proof}

\subsection{Good Sign Matrices}

We prove Lemma~\ref{lem:S_size} in this section.

\begin{proof}[Proof of Lemma~\ref{lem:S_size}]
For brevity, let $c_i = R_i/\norm{R_i}_2$. Since $c_i$'s are orthonormal vectors, we know that $\sum_{i\in I} (c_i)_j^2\leq 1$ for every $j\in U$. By the non-commutative Khintchine inequality for the operator norm,
\[
	\Pr\left\{ \norm{X_j}_{\op} \geq t \right\} \leq 2e^{-ct^2 s},
\]
where $c>0$ is an absolute constant. Again, since $c_i$'s are orthonormal, 
\[
	\sum_{i'\in I\setminus\{i\}} \langle \widetilde{M}_i, c_{i'}\rangle^2 = \norm{P_i}_2^2 \leq \delta^2\norm{\widetilde{M}_i}_2^2
\]
and again by the non-commutative Khintchine inequality,
\[
	\Pr\left\{ \norm{E_i}_{\op}\geq t\delta\norm{\widetilde{M}_i}_2 \right\} \leq 2e^{-ct^2 s}.
\]
Taking a union bound over $j\in U$ and $i\in I$, and using $|I|\leq 2^k$ and $s=\Omega(k)$, we see that
\[
	\max_{j\in U} \norm{X_j}_{\op} \leq C_1\qquad\text{and}\qquad \max_{i\in I} \frac{\norm{E_i}_{\op}}{\norm{\widetilde{M}_i}_2} \leq C_2\delta
\]
for sufficiently large absolute constants $C_1, C_2 > 0$, except with probability at most
\[
	4\cdot 2^k e^{-c\min\{C_1^2,C_2^2\}s} < 1/10.
\]
This finishes the proof.
\end{proof}

\subsection{Rounding}\label{sec:rounding}

We use the same entrywise perturbation principle as in the proof of \cite[Theorem 5.1]{LWW21}. We first note that, for $x = (v, w) \in \R^k \oplus \R^{2s}$,
\[
	\norm{Ax}_p^p \geq c_p 2^k \norm{x}_2^p.
\]
Indeed, by the definition of $A$ and the embedding guarantee for $T$,
\[
	\norm{Ax}_p^p \geq \frac12 \sum_{j\in U}\left(\ip{j}{v}^2 + w^\top C_jw\right)^{p/2}
	\geq c_p\sum_{j\in U}\left(\ip{j}{v}^2+\norm{w}_2^2\right)^{p/2}.
\]
The last sum is at least $c_p2^k(\norm{v}_2^2+\norm{w}_2^2)^{p/2}$, since it is bounded below by both $2^k\norm{w}_2^p$ and $\sum_{j\in U}|\ip{j}{v}|^p\geq c_p2^k\norm{v}_2^p$ by the Khintchine lower bound.
Fix a sign matrix collection $S\in \cS$. Round every entry of $A$ to a common grid of spacing $\zeta$, resulting in $\widetilde{A}$. Since $A$ has $n = 2^k m_0$ rows,
\[
	\norm{(A-\widetilde{A})x}_p \leq n^{1/p}\zeta\sqrt{d}\norm{x}_2.
\]
Choosing
\[
	\zeta \leq c_p'\frac{\eps}{m_0^{1/p}\sqrt{d}},
\]
with $c_p'>0$ sufficiently small gives
\[
	\norm{(A-\widetilde{A})x}_p \leq C_p'\eps \norm{Ax}_p.
\]
Thus, after adjusting the constant in the choice of $\zeta$,
\[
	\norm{\widetilde{A}x}_p^p = (1\pm \eps)\norm{Ax}_p^p, \quad \forall x\in\R^d.
\]
Now we examine the magnitude of the entries of $A$. By the choice of $T$, every coordinate of $T_\ell=(g_\ell,h_\ell)$ has magnitude at most $\poly(d/\eps)$, and hence $\norm{h_\ell}_2\leq \poly(d/\eps)$. Since $\norm{C_j^{1/2}}_{\op}\leq \sqrt{5/4}$, every entry of $(g_\ell j^\top,h_\ell^\top C_j^{1/2})$ has magnitude at most $\poly(d/\eps)$. Since $\zeta^{-1}=\poly(d/\eps)$, the rounded entries use $O_p(\log(nd))$ bits.

\section{Recovery of Sign Matrices}

For $S=\{S_i\}_{i\in I}\in\cS$, $v\in \R^k$, $\tau \geq 0$ and
$w\in \Sph^{2s-1}$, define
\[
	F_S(v,\tau,w)
	:= \sum_{j\in U} (\langle j,v\rangle^2 + \tau w^\top C_j w)^{p/2}.
\]
All quantities $W_i,X_j,C_j,Y_i,E_i$ and $A$ are understood to be constructed
from the displayed collection $S$; when needed, we write $C_j^S$ or $A_S$ to
emphasize this dependence.
Fix $S\in\cS$ in the following display and in Lemma~\ref{lem:F_properties}, and let $A=A_S$ be the corresponding unrounded matrix.
When this $S$ is clear from the context, we write $F$ instead of $F_S$. Then
\begin{equation}\label{eqn:A_approximates_F}
	(1-\eps)F(v,\tau,w)
	\leq 
	\norm{ A \begin{pmatrix}
			v\\
			\sqrt{\tau}w
	\end{pmatrix} }_p^p 
	\leq (1+\eps)F(v,\tau,w).
\end{equation}
Hence, we shall first analyse some properties of the function $F$.

\begin{lemma}\label{lem:F_properties}
Suppose that $w\in \Sph^{2s-1}$ and $i\in I$ are arbitrary. The following properties hold.
\begin{enumerate}[label=(\roman*)]
	\item There exists an absolute constant $C_E > 0$ such that
	\[
		\left|	\frac{(2/p)\partial_\tau F(i,0,w) - b_0}{\eta\|R_i\|_2} 
		- w^\top \begin{pmatrix}
			        0 & W_i \\
			        W_i^\top & 0 \\
				 \end{pmatrix} w \right|
		\le C_E\delta,
	\]
	where $b_0 = (M\mathbf{1})_i$ is a constant independent of $i$.
	
	\item There is a constant $C_p > 0$ depending only on $p$ such that whenever $0\leq \tau \leq 1/4$,
	\[
		F(i,\tau,w)	\le C_p 2^k k^{p/2}.
	\]
\end{enumerate}
\end{lemma}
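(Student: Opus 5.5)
Part (i) rests on differentiating $F$ in $\tau$ at $\tau=0$. Since $k$ is odd, $\ip{j}{i}\neq 0$ for all $j\in U$, so each summand $(\ip{j}{i}^2+\tau w^\top C_jw)^{p/2}$ is smooth in $\tau$ near $0$. Its derivative at $\tau=0$ is $\frac p2|\ip{j}{i}|^{p-2}\,w^\top C_jw = \frac p2 M_{i,j}\,w^\top C_j w$. Summing over $j$ and writing $w^\top C_jw = 1 + \eta\, w^\top \bigl(\begin{smallmatrix}0&X_j\\X_j^\top&0\end{smallmatrix}\bigr)w$ (using $\norm{w}_2=1$) gives
\[
	\frac{2}{p}\partial_\tau F(i,0,w) = (M\mathbf 1)_i + \eta\, w^\top\begin{pmatrix}0 & \sum_j M_{i,j}X_j\\ \bigl(\sum_j M_{i,j}X_j\bigr)^\top & 0\end{pmatrix} w .
\]
The term $b_0=(M\mathbf 1)_i$ does not depend on $i$: $\sum_j|\ip{j}{i}|^{p-2}$ is invariant under the sign-flip symmetry $j\mapsto j\odot i$ of $U$, so it equals the same sum for $i=\mathbf 1$.

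Next I replace $M$ by $\widetilde M$. Entrywise, the matrix $X=(X_j)_{j\in U}$ is a linear combination of the vectors $R_{i'}$, and these lie in $\range(\widetilde M)$. On that range $M$ and $\widetilde M$ agree, because $H$ diagonalizes both and the retained eigenvalue $\lambda_{k,p}$ is the same. Applying this entrywise gives $\sum_j M_{i,j}X_j = \sum_j\widetilde M_{i,j}X_j = Y_i = \norm{R_i}_2W_i+E_i$.

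Dividing by $\eta\norm{R_i}_2$, the error term is the quadratic form of the block matrix built from $E_i/\norm{R_i}_2$. For unit $w$ this is bounded by $\norm{E_i}_{\op}/\norm{R_i}_2$. Since $S\in\cS$ and $\norm{R_i}_2\ge\sqrt{1-\delta^2}\,\|\widetilde M_i\|_2$ by Lemma~\ref{lem:decomposition}, this is at most $C_2\delta/\sqrt{1-\delta^2}\le 2C_2\delta$. So $C_E=2C_2$ works.

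For part (ii), I use $w^\top C_jw\le 5/4$ (from $\norm{X_j}_{\op}\le C_1$ and $\eta$ small) and $\tau\le 1/4$. Together these give each summand at most $(\ip{j}{i}^2+1)^{p/2}\lesssim_p |\ip{j}{i}|^p+1$. Summing over $j$, the term $\sum_{j\in U}|\ip{j}{i}|^p$ equals $2^k\,\E|\sum_{t=1}^k \sigma_t|^p$ for i.i.d.\ Rademacher $\sigma_t$, since $i\in\{-1,1\}^k$. By Khintchine this is $\lesssim_p 2^k k^{p/2}$. Adding $2^k$ from the constant term gives $F(i,\tau,w)\le C_p2^kk^{p/2}$.

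The main obstacle is the identification $\sum_j M_{i,j}X_j=Y_i$. It requires checking that every entry sequence $((X_j)_{ab})_{j\in U}$ lies in $\range(\widetilde M)$, and that $M$ restricted to this range equals $\widetilde M$. The latter needs the eigenspace of $M$ for $\lambda_{k,p}$ to be exactly the span of the first $r$ columns of $H$, which is how $\widetilde M$ was defined. A minor additional point is justifying differentiation at the endpoint $\tau=0$ as a one-sided derivative; this is immediate because the summands are smooth there.
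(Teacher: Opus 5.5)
Your proposal is correct and follows the paper's proof essentially step for step. You differentiate termwise at $\tau=0$ and replace $\sum_j M_{i,j}X_j$ by $Y_i=\norm{R_i}_2W_i+E_i$, using that $M$ and $\widetilde M$ agree on $\range(\widetilde M)$; you control the error by $\norm{E_i}_{\op}/\norm{R_i}_2$ via the definition of $\cS$ and Lemma~\ref{lem:decomposition}; and you use $w^\top C_jw\le 5/4$ plus Khintchine for (ii). One minor remark on your closing comment: agreement on $\range(\widetilde M)$ only needs the first $r$ columns of $H$ to be $\lambda_{k,p}$-eigenvectors of $M$, which holds by construction, not that they span the entire $\lambda_{k,p}$-eigenspace.
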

\begin{proof}
	Since $M$ and $\widetilde{M}$ agree on $\range(\widetilde{M})$, we have
	$\sum_{j\in U} M_{i,j} X_j = \sum_{j\in U}\widetilde{M}_{i,j}X_j = Y_i$. Therefore,
	\[
		\frac{\partial F}{\partial \tau}(i,0,w)
		= \frac{p}{2}\sum_{j\in U}|\langle i,j\rangle|^{p-2} w^\top C_j w 
		= \frac{p}{2}\left( \sum_{j\in U}|\langle i,j\rangle|^{p-2} + \eta w^\top \begin{pmatrix}
			0 & Y_i \\	Y_i^\top & 0 \\
		\end{pmatrix} w \right).
	\]
	Let $b_0 = \sum_{j\in U} |\ip{i}{j}|^{p-2}$, which is independent of $i$. We now have
	\[
		\frac{(2/p)\partial_\tau F(i,0,w) - b_0}{\eta\norm{R_i}_2} = w^\top \begin{pmatrix}
			0 & W_i \\	W_i^\top & 0 \\
		\end{pmatrix} w + \frac{1}{\norm{R_i}_2}
		w^\top \begin{pmatrix}
			0 & E_i \\	E_i^\top & 0 \\
		\end{pmatrix} w
	\]
	By the definition of $\cS$ and $\norm{R_i}_2\geq \sqrt{1-\delta^2}\norm{\widetilde{M}_i}_2$, we have
	\[
		\left| \frac{1}{\norm{R_i}_2}
			w^\top \begin{pmatrix}
				0 & E_i \\	E_i^\top & 0 \\
			\end{pmatrix} w \right| \leq C_E\delta.
	\]
	This proves (i).
	
	Next we prove (ii). Since $(3/4)I \preceq C_j\preceq (5/4)I$, we know that $w^\top C_j w \leq 5/4$. Hence
	\[
		F(i,\tau,w) \leq C_p\sum_{j\in U}\left(|\ip{i}{j}|^p+1\right).
	\]
	For uniformly random $j\in U$, $\ip{i}{j}$ has the distribution of a sum of $k$ independent Rademacher variables. Therefore $\E_j|\ip{i}{j}|^p\leq C_p k^{p/2}$ by the Khintchine inequality. It follows that
	\[
		F(i,\tau,w) \leq C_p2^k(k^{p/2}+1) \leq C_p2^k k^{p/2}. \qedhere
	\]
\end{proof}

Next we shall show how to recover sign matrices $\{S_i\}_{i\in I}$ from the quadratic form $w^\top \begin{psmallmatrix}
	0 & W_i \\ W_i^\top & 0\\
\end{psmallmatrix} w$, provided an accurate estimate of $\partial_\tau F_S(i,0,w)$, and then show how we can obtain such an accurate estimate.

Let $S = \{S_i\}_{i\in I}$ and $S' = \{S_i'\}_{i\in I}$ be two collections of sign matrices in $\cS$. We define their Hamming distance as
\[
	d_H(S,S') = \sum_{i\in I} d_H(S_i, S_i').
\]

\begin{lemma}\label{lem:conditional_recovery}
	There exists an absolute constant $c_0>0$ such that the following holds. Let $\cC \subseteq \cS$ be a family of sign matrix collections with pairwise Hamming distance at least $c_0 |I| s^2$. For each $i\in I$, there is a function $\hat{D}_i:\Sph^{2s-1}\to \R$. We say $S\in\cC$ is \emph{compatible} with $\{\hat{D}_i\}_{i\in I}$ if 	
	\begin{equation}\label{eqn:D_hat}
		\left|\hat{D}_i(w) - \frac{\partial F_S}{\partial \tau}(i,0,w)\right| \leq \theta\eta\norm{R_i}_2
	\end{equation}
	for all $i\in I$ and $w\in\Sph^{2s-1}$. If	
	\begin{equation}\label{eqn:important_condition}
		\frac{2\theta}{p} + C_E\delta \leq \frac{1}{2}\sqrt{c_0},
	\end{equation}
	then at most one $S\in \cC$ is compatible with $\{\hat{D}_i\}_{i\in I}$.
\end{lemma}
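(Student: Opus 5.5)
Suppose two collections $S,S'\in\cC$ are both compatible with $\{\hat D_i\}_{i\in I}$; we derive a contradiction. For each $i\in I$ and $w\in\Sph^{2s-1}$, define the normalized statistic
\[
	Q_i(w) = \frac{(2/p)\hat D_i(w) - b_0}{\eta\norm{R_i}_2}.
\]
The constant $b_0=(M\mathbf 1)_i$ and the vectors $R_i$ do not depend on the sign matrices, so this is the same function for $S$ and $S'$. Dividing the compatibility condition \eqref{eqn:D_hat} by $(p/2)\eta\norm{R_i}_2$ and combining it with Lemma~\ref{lem:F_properties}(i) gives, for every $i$ and $w$,
\[
	\left|Q_i(w) - w^\top\begin{pmatrix}0&W_i\\W_i^\top&0\end{pmatrix}w\right| \le \frac{2\theta}{p}+C_E\delta,
\]
and the same bound with $W_i'$ in place of $W_i$. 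By the triangle inequality,
\[
	\left|w^\top\begin{pmatrix}0&W_i-W_i'\\(W_i-W_i')^\top&0\end{pmatrix}w\right|\le 2\left(\frac{2\theta}{p}+C_E\delta\right)\le \sqrt{c_0},
\]
where the last step uses \eqref{eqn:important_condition}.

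Next we turn this quadratic form into an operator norm. Write $w=(a,b)$ with $a,b\in\R^s$. The quadratic form equals $2a^\top (W_i-W_i')b$. Taking $a,b$ to be top singular vectors of $W_i-W_i'$, each scaled to norm $1/\sqrt2$, shows that the supremum over $w\in\Sph^{2s-1}$ equals $\norm{W_i-W_i'}_{\op}$. Therefore
\[
	\norm{W_i-W_i'}_{\op}\le\sqrt{c_0}\quad\text{for all } i\in I.
\]

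Finally we compare this with the Hamming distance. The entries of $S_i-S_i'$ lie in $\{0,\pm2\}$, so
\[
	\norm{W_i-W_i'}_F^2=\frac{4\,d_H(S_i,S_i')}{s}.
\]
The matrix has rank at most $s$, so $\norm{\cdot}_F^2\le s\norm{\cdot}_{\op}^2$. This gives $4d_H(S_i,S_i')\le s^2 c_0$. Summing over $i\in I$,
\[
	d_H(S,S')\le \frac{c_0|I|s^2}{4}<c_0|I|s^2,
\]
which contradicts the pairwise distance assumption on $\cC$. Hence at most one $S\in\cC$ is compatible, and the argument holds for any fixed $c_0>0$.

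This rank–Frobenius comparison is the step that looks like the main obstacle, because an operator-norm bound is in general much weaker than a Frobenius-norm bound. It causes no loss here since the dimension is only $s$: $\norm{W_i-W_i'}_F^2 = 4d_H(S_i,S_i')/s$ is at most $4s$, while $s\norm{W_i-W_i'}_{\op}^2$ is also of order $s$. So the constants match, and any absolute $c_0$ works. The nontrivial part is showing that a family $\cC$ with this pairwise distance and size $2^{\Omega(|I|s^2)}$ exists inside $\cS$, but that is not part of this lemma.
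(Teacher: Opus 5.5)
Your proof is correct and is essentially the paper's argument run in contrapositive form. The paper uses averaging to pick one $i$ with $d_H(S_i,S_i')\ge c_0 s^2$, then uses $\|W_i-W_i'\|_{\op}\ge \|W_i-W_i'\|_F/\sqrt{s}\ge 2\sqrt{c_0}$ to exhibit a separating $w\in\Sph^{2s-1}$; you instead bound $\|W_i-W_i'\|_{\op}\le\sqrt{c_0}$ for every $i$ and sum, and you also make explicit two points the paper leaves implicit (that $b_0,\eta,R_i$ do not depend on $S$, and that the supremum of the block quadratic form over the sphere equals the operator norm).
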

\begin{proof}
	Suppose that $d_H(S,S')\geq c_0|I|s^2$, then there exists $i\in I$ such that  $d_H(S_i,S_i')\geq c_0 s^2$. Let $W_i = S_i/\sqrt{s}$ and $W_i' = S_i'/\sqrt{s}$, then
	\[
		\norm{W_i - W_i'}_F^2 = \frac{4}{s}d_H(S_i,S_i') \geq 4 c_0 s.
	\]
	Hence
	\[
		\norm{W_i - W_i'}_{\op}\geq \frac{1}{\sqrt s}\norm{W_i - W_i'}_F \geq 2\sqrt{c_0}
	\]
	and there exists $w\in \Sph^{2s-1}$ such that
	\begin{equation}\label{eqn:direction_w}
		\left|w^\top \begin{pmatrix}
			0 & W_i \\
			W_i^\top & 0 \\
		\end{pmatrix} w - w^\top \begin{pmatrix}
		0 & W_i' \\
		W_i'^\top & 0 \\
	\end{pmatrix} w\right| \geq 2\sqrt{c_0}.
	\end{equation}
	On the other hand, by Lemma~\ref{lem:F_properties}(i), we have
	\[
		\left| \frac{(2/p)\hat{D}_i(w) - b_0}{\eta\norm{R_i}_2} - w^\top \begin{pmatrix}
			0 & W_i \\
			W_i^\top & 0 \\
		\end{pmatrix} w \right| \leq \frac{2\theta}{p} + C_E\delta \leq \frac{1}{2}\sqrt{c_0}.
	\]
	If $\{\hat{D}_i\}_{i\in I}$ is compatible with both $S$ and $S'$, then by the triangle inequality, the left-hand side of \eqref{eqn:direction_w} would be at most $\sqrt{c_0}$, which contradicts \eqref{eqn:direction_w}.	
\end{proof}

\begin{lemma}\label{lem:derivative_approximation}
	Suppose that an instance of $Q_p$ is a correct $\ell_p$-subspace sketch for the rounded hard matrix $\widetilde{A}$.
	Then one can construct functions $\hat{D}_i:\Sph^{2s-1}\to\R$ for each $i\in I$ such that \eqref{eqn:D_hat} holds for all $w\in \Sph^{2s-1}$.
\end{lemma}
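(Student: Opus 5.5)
The plan is to recover $\partial_\tau F_S(i,0,w)$ from the sketch by polynomial interpolation in $\tau$, and then use the Markov brothers' inequality to control the derivative of the interpolant.

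\textbf{Sketch values give approximate values of $F$.} Fix $i\in I$ and $w\in\Sph^{2s-1}$, and query $Q_p$ at $x_\tau=(i,\sqrt{\tau}w)$ for $\tau$ in a set of $L+1$ Chebyshev nodes $\tau_0,\dots,\tau_L$ of the interval $[0,1/4]$.
\begin{itemize}
\item The rounding in Section~\ref{sec:rounding} gives $\|\widetilde A x\|_p^p=(1\pm\eps)\|Ax\|_p^p$.
\item Combined with \eqref{eqn:A_approximates_F}, each answer satisfies $Q_p(x_{\tau_\ell})=F(i,\tau_\ell,w)+e_\ell$, where $|e_\ell|\lesssim\eps F(i,\tau_\ell,w)$.
\item By Lemma~\ref{lem:F_properties}(ii), this gives $|e_\ell|\lesssim_p \eps 2^k k^{p/2}$.
\end{itemize}
Let $\hat P$ be the degree-$L$ polynomial interpolating the data $(\tau_\ell,Q_p(x_{\tau_\ell}))$, and set $\hat D_i(w):=\hat P'(0)$. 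This is a deterministic function of the sketch, so it is defined for every $w$ simultaneously. Correctness of the for-all sketch holds with probability $0.9$, and this suffices.

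\textbf{Error decomposition.} Let $P$ be the interpolant of the exact values $F(i,\tau_\ell,w)$. Then
\[
|\hat D_i(w)-\partial_\tau F(i,0,w)|\le |(\hat P-P)'(0)|+|P'(0)-\partial_\tau F(i,0,w)|.
\]
\begin{itemize}
\item For the first term, the Lebesgue constant of Chebyshev nodes is $O(\log L)$, so $\hat P-P$ has size $O(\log L\cdot\eps 2^kk^{p/2})$ on $[0,1/4]$. By the Markov brothers' inequality on an interval of length $1/4$, its derivative is $O(L^2\log L\cdot\eps 2^k k^{p/2})$. The extra $\log L$ can be absorbed by taking $L^4$ slightly larger in \eqref{eqn:assumption_k}, or by a slightly more careful bound.
\item For the second term I will use analyticity. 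Each summand $(\ip{j}{i}^2+\tau w^\top C_jw)^{p/2}$ has $\ip{j}{i}^2\ge1$, because $k$ is odd, and $0<w^\top C_jw\le 5/4$. Hence it is analytic in $\tau$ on the disc $|\tau|<4/5$. On a fixed Bernstein ellipse around $[0,1/4]$ contained in, say, $|\tau-1/8|<1/2$, its modulus is at most $C_p(\ip{i}{j}^2+1)^{p/2}$. Summing and using the Khintchine bound as in Lemma~\ref{lem:F_properties}(ii), $|F|\le C_p2^kk^{p/2}$ on the ellipse.
\item Standard Chebyshev approximation theory then gives that $F-P$ is at most $C_p2^kk^{p/2}\rho^{-L}$ on $[0,1/4]$, for some absolute $\rho>1$. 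Applying Markov's inequality to $P-P^*$, where $P^*$ is the truncated Chebyshev series, together with Cauchy estimates for the tail, bounds the derivative error by $C_pL^2 2^kk^{p/2}\rho^{-L}$. Choosing $L\asymp\log(1/\eps)$ with a large enough constant makes this at most $\eps 2^k k^{p/2}L^2$.
\end{itemize}

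\textbf{Comparing with the target.} The total error is therefore $\lesssim_p L^2\eps 2^kk^{p/2}$, and it must be at most $\theta\eta\|R_i\|_2$.
\begin{itemize}
\item Since $\widetilde M$ is $\lambda_{k,p}$ times the projection onto an $r$-dimensional span of Hadamard columns with entries $\pm2^{-k/2}$, we have $\|\widetilde M_i\|_2^2=\lambda_{k,p}^2r/2^k$.
\item With $r=\binom{k}{m}\asymp 2^k/\sqrt k$ and Lemma~\ref{lem:eigenvalue}, this gives $\|R_i\|_2\ge\sqrt{1-\delta^2}\|\widetilde M_i\|_2\gtrsim_p 2^{k/2}k^{-3/4}$.
\item By \eqref{eqn:assumption_k}, $\eps 2^{k/2}\le \sqrt{c_{p,\theta}}\,L^{-2}k^{-p/2-3/4}$. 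Hence $L^2\eps2^kk^{p/2}\le\sqrt{c_{p,\theta}}\,2^{k/2}k^{-3/4}$.
\item Choosing $c_{p,\theta}$ small relative to $\theta\eta$ and the implied constants yields \eqref{eqn:D_hat}.
\end{itemize}

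\textbf{Main obstacle.} The delicate step is the analytic-continuation bound: $F$ must stay bounded by $O_p(2^kk^{p/2})$ on a complex neighbourhood of $[0,1/4]$ of fixed size. This is what makes the geometric decay rate $\rho$ absolute and keeps the number of nodes at $L\asymp\log(1/\eps)$.

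The lower bound $\ip{i}{j}^2\ge1$ from $k$ odd is the key point. It ensures that $\ip{i}{j}^2+\tau w^\top C_jw$ stays away from $0$ and from the branch cut of $z\mapsto z^{p/2}$ when $|\tau|\le 1/2$. That is why the construction uses odd $k$.
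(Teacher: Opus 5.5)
Your proposal is correct and follows essentially the paper's route. You take $\hat D_i(w)$ to be the derivative at $0$ of the Chebyshev(-Lobatto) interpolant on $[0,1/4]$ of the sketch answers, i.e.\ $\sum_\ell \beta_\ell Q_p(x_\ell)$. You then bound the differentiation error by analyticity on a fixed Bernstein ellipse using $\ip{i}{j}^2\ge 1$, where the paper cites Tadmor's spectral-differentiation estimate per summand and you instead combine the interpolation error, Markov's inequality and Chebyshev-coefficient decay. Finally you compare with $\norm{R_i}_2\gtrsim_p 2^{k/2}k^{-3/4}$ via \eqref{eqn:assumption_k}, exactly as the paper does.

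The only discrepancy is the extra $\log L$ you lose through the Lebesgue constant plus Markov. Your ``more careful bound'' is exactly the paper's: the endpoint differentiation weights, read off from the first row of the Chebyshev differentiation matrix, satisfy $\sum_\ell|\beta_\ell|\le CL^2$. This removes the $\log L$, so \eqref{eqn:assumption_k} need not be changed.
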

\begin{proof}
	We claim that there exist an integer $L\asymp \log(1/\eps)$ and coefficients $\beta_0,\dots,\beta_L$ such that $\sum_\ell |\beta_\ell| \leq C L^2$ and
	\[
		\left| \sum_{\ell=0}^L \beta_\ell F_S(i,\tau_\ell,w) - \partial_\tau F_S(i,0,w) \right| \leq C_p \eps L^2 2^k k^{p/2}.
	\]
	The proof of the claim is postponed to Section~\ref{sec:diff_interpolation}.
	
	Let $x_\ell = (i, \sqrt{\tau_\ell}w)$. By the correctness of $Q_p$ for $\widetilde{A}$, the rounding guarantee from Section~\ref{sec:rounding}, and \eqref{eqn:A_approximates_F}, we have
	\[
		\left| Q_p(x_\ell) - F_S(i,\tau_\ell,w) \right| \leq C_p\eps F_S(i,\tau_\ell,w).
	\]
	By Lemma~\ref{lem:F_properties}(ii), we have
	\[
		\left| \sum_{\ell=0}^L \beta_\ell  Q_p(x_\ell) - \sum_{\ell=0}^L \beta_\ell F_S(i,\tau_\ell,w) \right| \leq C_p \eps 2^k k^{p/2} \sum_{\ell=0}^L |\beta_\ell| \leq C_p \eps L^2 2^k k^{p/2},
	\]
	where the absolute constant in $\sum_\ell|\beta_\ell|\leq C L^2$ has been absorbed into $C_p$.
	Define $\hat{D}_i(w) = \sum_{\ell=0}^L \beta_\ell Q_p((i, \sqrt{\tau_\ell}w))$, we have
	\[
		\left|\hat{D}_i(w) - \partial_\tau F_S(i,0,w)  \right| \leq C_p \eps L^2 2^k k^{p/2}.
	\]
	Note that 
	\[
		\norm{R_i}_2 \geq \sqrt{1-\delta^2}\norm{\widetilde{M}_i}_2 = \sqrt{1-\delta^2}|\lambda_{k,p}|\sqrt{\frac{r}{2^k}} \gtrsim_p \sqrt{\frac{r}{k}} \gtrsim \frac{2^{k/2}}{k^{3/4}}.
	\]
	Recall the assumption \eqref{eqn:assumption_k}, we have, after adjusting constants, $C_p \eps L^2 2^k k^{p/2} \leq \theta \eta \norm{R_i}_2$.
\end{proof}

\subsection{Interpolation of Derivatives}\label{sec:diff_interpolation}
\begin{lemma}
	Let $\gamma > 0$. There are constants $R > 1$ and $C_\gamma > 0$ such that, for every $N\geq 1$, there exist nodes
	\[
		0 = t_0 < t_1 < \cdots < t_N = \frac{1}{4}
	\]
	and coefficients $\beta_0,\beta_1,\dots,\beta_N$ for which
	\[
		\left| \sum_{i=0}^N \beta_i (a + ct_i)^\gamma - \gamma c a^{\gamma-1} \right| \leq C_\gamma N^2 R^{-N} a^\gamma
	\]
	whenever $a\geq 1$ and $3/4 \leq c \leq 5/4$. Furthermore, the coefficients can be chosen such that
	\[
		\sum_{i=0}^N |\beta_i| \leq C N^2.
	\]
\end{lemma}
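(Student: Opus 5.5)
The plan is to treat this as numerical differentiation of an analytic function at the endpoint of an interval, using polynomial interpolation at Chebyshev--Lobatto nodes. For fixed $a\geq 1$ and $c\in[3/4,5/4]$, set $f(t)=(a+ct)^\gamma = a^\gamma(1+ct/a)^\gamma$, so that $f'(0)=\gamma c a^{\gamma-1}$ is exactly the target quantity. Change variables by $t=(1+u)/8$, which maps $u\in[-1,1]$ onto $t\in[0,1/4]$. As nodes I would take $u_j=-\cos(j\pi/N)$ for $j=0,\dots,N$, i.e.\ $t_j=(1-\cos(j\pi/N))/8$. These satisfy $t_0=0$, $t_N=1/4$ and are increasing. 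Let $p_N$ be the degree-$N$ interpolant of $f$ at these nodes, and define $\beta_j := \ell_j'(0)$, where $\ell_j$ is the Lagrange basis polynomial in the variable $t$. Then $\sum_j\beta_j f(t_j)=p_N'(0)$. The key structural point is that the $\beta_j$ depend only on $N$ and not on $a$ or $c$.

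\textbf{Analyticity.} The only singularity of $t\mapsto(1+ct/a)^\gamma$ is the branch point $t=-a/c\leq -4/5$, taking the branch cut to the left of it. In the $u$ variable this is the point $u=-8a/c-1\leq -7.4$. So for any fixed $\rho_0<7.4+\sqrt{7.4^2-1}$, $f$ is analytic inside the Bernstein ellipse $E_{\rho_0}$. Moreover, on that closed ellipse the quantity $|1+ct/a|$ is bounded above and away from zero, uniformly over $a\geq1$ and $c\in[3/4,5/4]$. Indeed, since $|ct/a|\leq (5/4)|t|$, the point $1+ct/a$ lies in a fixed compact set avoiding $(-\infty,0]$. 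This gives $\sup_{E_{\rho_0}}|f|\leq K_\gamma a^\gamma$.

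\textbf{Derivative error.} Fix some $1<R<\rho_0$; the $R$ in the statement will be this $R$. Write $f=\sum_n a_nT_n(u)$ with $|a_n|\leq 2K_\gamma a^\gamma\rho_0^{-n}$. The interpolant at the Lobatto points has the aliased coefficients $a_n+\sum_{m\geq1}(a_{2mN-n}+a_{2mN+n})$. Consequently $f-p_N$ is a combination of $T_n$ whose coefficients are $O(K_\gamma a^\gamma\rho_0^{-N})$ times a geometric weight, for both $n\le N$ (aliasing corrections) and $n>N$ (tail). Differentiate and use $|T_n'(\pm1)|=n^2$ together with the chain-rule factor $du/dt=8$. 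The $n>N$ tail is then bounded by $\sum_{n>N}n^2\rho_0^{-n}\lesssim N^2\rho_0^{-N}$. The aliasing terms for $n\leq N$ contribute at most $\sum_{n\le N} n^2\rho_0^{-2N+n}\lesssim N^2\rho_0^{-N}$. Together these give $|p_N'(0)-f'(0)|\leq C_\gamma N^2R^{-N}a^\gamma$. An alternative route to the same bound is Markov's inequality applied to $p_N-q_N$, where $q_N$ is the Chebyshev truncation, which yields the same $N^2$ loss.

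\textbf{Bounding $\sum_j|\beta_j|$.} This is the step I expect to need the most care, although it has a closed form. The entries of the Chebyshev differentiation matrix in the row of the endpoint $u_0=-1$ are explicitly $D_{00}=-(2N^2+1)/6$ and $|D_{0j}|=\frac{2}{1+u_j}\cdot\frac{c_0}{c_j}$ with $c_j\in\{1,2\}$. Here $1+u_j=2\sin^2(j\pi/2N)$, so $\sum_{j\ge1}|D_{0j}|\lesssim\sum_{j=1}^N N^2/j^2\lesssim N^2$. Multiplying by $8$ gives $\sum_j|\beta_j|\leq CN^2$ with an absolute constant $C$. Finally, the constants $R$ and $C_\gamma$ depend only on $\gamma$, through $K_\gamma$; $\rho_0$ itself is absolute.
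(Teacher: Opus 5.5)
Your proposal is correct and follows essentially the same route as the paper: the same Chebyshev--Lobatto nodes $t_j=(1-\cos(j\pi/N))/8$, the same weights $\beta_j$ read off from the endpoint row of the Chebyshev differentiation matrix (with $\sum_j|\beta_j|\lesssim N^2$ via $1+u_j=2\sin^2(j\pi/2N)$), and analyticity of $g(u)=f((1+u)/8)$ in a Bernstein ellipse with $\sup|g|\lesssim_\gamma a^\gamma$. The only difference is in the endpoint derivative error: the paper cites Tadmor's Corollary 4.6, obtaining $N^{5/2}(3/2)^{-N}$ and then weakening it to $N^2(4/3)^{-N}$, whereas you derive the sharper $N^2\rho_0^{-N}$ directly from coefficient decay, aliasing and $|T_n'(\pm1)|=n^2$ (a minor slip: the off-diagonal entry is $|D_{0j}|=\frac{c_0}{c_j}\cdot\frac{1}{1+u_j}$, not twice that, which only changes a constant).
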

\begin{proof}
	Define $f(t) = (a + ct)^\gamma$, then $f'(0) = \gamma c a^{\gamma-1}$. Let $p_N$ be the Lagrange interpolating polynomial of degree at most $N$ interpolating $f$ at the nodes $t_i = (1/8)(1-\cos((i/N)\pi))$. We know that $p_N'(0) = \sum_{i=0}^N \beta_i f(t_i)$, where
	\[
		-\frac{1}{8}\beta_i = 
		\begin{cases}
			(2N^2+1)/6 & i = 0\\
			2(-1)^i/(1-\cos(i\pi/N)) & 1\leq i\leq N-1 \\
			(-1)^N/2 & i=N.
		\end{cases}
	\]
	It is easy to verify that $\sum_i |\beta_i| \leq CN^2$. It remains to bound $|p_N'(0) - f'(0)|$. 
	
	Suppose that $t(x)$ is a linear function that maps $[-1,1]$ to $[0,1/4]$. Let $g(x) = f(t(x))$ and $t(x_i) = t_i$ for $x_i = \cos((N-i)\pi/N)$, which are exactly Chebyshev nodes. We have $g'(-1) = 1/8 f'(0)$. Also let $w_N(x) = p_N(t(x))$, then $w_N'(-1) = p_N'(0)/8$. Thus
	\[
		|p_N'(0) - f'(0)| = 8|w_N'(-1) - g'(-1)|.
	\]
	 By \cite[Corollary 4.6]{tadmor}, 
	 \[
	 	|w_N'(-1) - g'(-1)| \leq C_\alpha \cdot \max_{z\in E_\alpha} |g(z)| \cdot N^{\frac{5}{2}} \alpha^{-N},
	 \]
	 where $\alpha$ is chosen such that $g$ is analytic in the Bernstein ellipse $E_\alpha$. It is easy to see that we can take $\alpha = 3/2$ and $|g(z)|\leq (3a/2)^\gamma$. Hence,
	 \[
	 	|w_N'(-1) - g'(-1)| \leq C a^\gamma N^{\frac{5}{2}} \left(\frac{2}{3}\right)^N.
	 \]
	 Choose $R = 4/3$, then
	 \[
	 	N^{\frac{5}{2}} \left(\frac{2}{3}\right)^N = N^2 \left(\frac{3}{4}\right)^N \left( N^{\frac{1}{2}} \left(\frac{8}{9}\right)^N \right) \leq C N^2 \left(\frac{3}{4}\right)^N = C N^2 R^{-N}.
	 \]
	 This completes the proof.
\end{proof}

The claim in the proof of Lemma~\ref{lem:derivative_approximation} follows from applying the preceding lemma with $\gamma = p/2$, $N \asymp \log(1/\eps)$ such that $R^{-N} \leq \eps$ and Lemma~\ref{lem:F_properties}(ii).

\subsection{Lower Bound for Subspace Sketches}
Let $C_E$ denote the absolute constant from Lemma~\ref{lem:F_properties}, and let
$c_0>0$ be a sufficiently small absolute constant for Lemma~\ref{lem:conditional_recovery}
and the packing argument below.
The absolute constant $\theta>0$ in the construction is chosen small enough that
$2\theta/p\leq \frac{1}{4}\sqrt{c_0}$, and then choose $\delta>0$ small enough
that \eqref{eqn:important_condition} holds.
By Lemma~\ref{lem:S_size} and a greedy Gilbert-Varshamov packing inside $\cS$, there is a subfamily $\cC\subseteq\cS$ with pairwise Hamming distance at least $c_0|I|s^2$ and $|\cC|=2^{\Omega(|I|s^2)}$. 

By Yao's minimax principle, it suffices to consider a deterministic data structure $Q_p$ for the distribution that chooses $S$ uniformly from $\cC$ and gives the rounded hard matrix $\widetilde{A}_S$ as input. Suppose $Q_p$ is correct with probability at least $0.9$ over this choice of $S$, and let $\cC'\subseteq\cC$ be the set of $S$ for which $Q_p$ is correct. Then $|\cC'|\geq 0.9|\cC|$, and $\cC'$ still has pairwise Hamming distance at least $c_0|I|s^2$. Combining Lemmata~\ref{lem:conditional_recovery} and~\ref{lem:derivative_approximation}, we see that, for each $S\in\cC'$, the deterministic sketch $Q_p$ can recover $S$. Applying the deterministic counting argument to $\cC'$, the sketch must have at least
\[
	\log|\cC'| = \Omega(|I|s^2)
	= \Omega\mleft(\frac{d^2}{\eps^2\polylog(1/\eps)}\mright)
\]
bits. Finally, the constructed matrix has $n_0=2^k m_0$ rows. Since $2^k=\widetilde\Theta_p(\eps^{-2})$, $s=\Theta(d)$ and $m_0=O_p((s\log(1/\eps)/\eps^2)^{\alpha_p})$, we have
\[
	n_0=\widetilde O_p(d^{\alpha_p}/\eps^{2+2\alpha_p}).
\]
For any larger number of rows $n$, we pad the hard instance with zero rows. This proves Theorem~\ref{thm:subspace_sketch_lower_bound}.

\section{Dimension Lower Bound}
\begin{proof}[Proof of Theorem~\ref{thm:dimension_lower_bound}]
	We use the standard conversion from sketch lower bounds to dimension lower bounds
	from \cite{LWW21}. Let $\widetilde A$ be the rounded hard instance from
	Theorem~\ref{thm:subspace_sketch_lower_bound}, and view its column span as a
	$d$-dimensional subspace of $\ell_p^{n_0}$; indeed, $\widetilde A$ has full column rank by the conditioning and rounding bounds in Section~\ref{sec:rounding}. If this
	subspace admitted a $(1+\delta)$-embedding into $\ell_p^N$, with
	$\delta = c_p\eps$ sufficiently small, then there would be a matrix
	$B\in\R^{N\times d}$ such that
	\[
		\norm{Bx}_p^p=(1\pm \eps)\norm{\widetilde A x}_p^p
		\qquad\text{for all }x\in\R^d .
	\]
	Rounding the entries of $B$ as in \cite{LWW21} gives a valid
	$\ell_p$-subspace sketch for $\widetilde A$ using
	$O_p(Nd\log(n_0d/\eps))$ bits. Therefore Theorem~\ref{thm:subspace_sketch_lower_bound}
	implies
	\[
		N d \log(n_0 d/\eps)
		\gtrsim_p
		\frac{d^2}{\eps^2\polylog(1/\eps)}.
	\]
	Since $n_0=\widetilde O_p(d^{\alpha_p}/\eps^{2+2\alpha_p})$, rearranging gives
	\[
		N \gtrsim_p \frac{d}{\eps^2\polylog(d/\eps)} .
	\]
\end{proof}

\section*{Acknowledgements}
The author is supported in part by the Singapore Ministry of Education AcRF Tier 1 grant RG21/25.

The central technical idea of this paper originated from ChatGPT 5.6 Sol. The author substantially revised and completed the AI-assisted material, and is responsible for all mathematical claims, proofs, references, and the final manuscript.

\bibliographystyle{plain}
\bibliography{reference}

\end{document}